\newif\ifimproved\improvedtrue
\newtheorem{observation}{Observation}
\newtheorem{definition}{Definition}
\newtheorem{lemma}{Lemma}
\newtheorem{theorem}{Theorem}
\newtheorem{corollary}{Corollary}
\def\dash---{\kern.16667em---\penalty\exhyphenpenalty\hskip.16667em\relax}
\newcommand{\interior}{\operatorname{int}}
\newcommand{\opt}{\operatorname{OPT}}
\newcommand{\R}{\mathds{R}}
\newcommand{\atgp}{\operatorname{ATGP}}
\newcommand{\V}{\operatorname{\mathcal{V}}}
\newcommand{\A}{\mathcal{A}}
\newcommand\e\emph
\acrodef{AGP}{Art Gallery Problem}
\acrodef{TGP}{Terrain Guarding Problem}
\acrodef{ATGP}{Altitude Terrain Guarding Problem}
\def\comic#1#2#3{\parbox{#1}{\centering\includegraphics[width=#1]{#2}\\{\footnotesize #3}}}
\def\comicII#1#2{\parbox{#1}{\centering\includegraphics[width=#1]{#2}}}
\begin{document}





\title{Altitude Terrain Guarding and Guarding Uni-Monotone Polygons}
\date{}

\author{%
	Ovidiu Daescu\thanks{Department of Computer Science, University of Texas at Dallas, \texttt{$\{$daescu, malik$\}$@utdallas.edu}},\,
	Stephan~Friedrichs%
	\thanks{%
		{Max Planck Institute for Informatics, Saarbr\"ucken, Germany},
		\texttt{sfriedri@mpi-inf.mpg.de}}
	\thanks{{Saarbr\"ucken Graduate School of Computer Science}},
	Hemant Malik$^*$,\\
	Valentin~Polishchuk\thanks{%
		{Communications and Transport Systems, ITN, Link\"oping University, Sweden. }
		\texttt{firstname.lastname@liu.se}} \,
	and Christiane~Schmidt$^\S$
}

\maketitle
\begin{abstract}
We present an optimal, linear-time algorithm for the following version of terrain guarding: given a 1.5D terrain and a horizontal line, place the minimum number of guards on the line to see all of the terrain. We prove that the cardinality of the minimum guard set coincides with the cardinality of a maximum number of ``witnesses'', i.e., terrain points, no two of which can be seen by a single guard. We show that our results also apply to the Art Gallery problem in ``monotone mountains'', i.e., $x$-monotone polygons with a single edge as one of the boundary chains. This means that any monotone mountain is ``perfect'' (its guarding number is the same as its witness number); we thus establish the first non-trivial class of perfect polygons.
\end{abstract}

{\bf Keywords:}
Terrain Guarding Problem, Art Gallery Problem, Altitude Terrain Guarding Problem, Perfect Polygons, Monotone Polygons, Uni-mono- tone Polygons, Monotone Mountains





\section{Introduction}
\label{sec:introduction}
Both the \acf{AGP} and the 1.5D \acf{TGP} are well known problems in Computational Geometry; see the classical book \cite{r-agta-87} for the former and Section~\ref{sec:rw} for the recent work on the latter.
In the \ac{AGP}, we are given a polygon $P$ in which we have to place the minimum number of point-shaped guards, such that they see all of $P$.
In the 1.5D \ac{TGP}, we are given an $x$-monotone chain of line segments in $\R^2$, the terrain $T$, on which we have to place a minimum number of point-shaped guards, such that they see $T$.

Both problems have been shown to be NP-hard: Krohn and Nilsson~\cite{kn-cgmp-12} proved the \ac{AGP} to be hard even for monotone polygons by a reduction from MONOTONE
3SAT, and King and Krohn~\cite{kk-tginph-11} established the NP-hardness of both the discrete and the continuous TGP (with guards restricted to the terrain vertices or guards located anywhere on the terrain)
by a reduction from PLANAR 3SAT.

The problem of guarding a uni-monotone polygon (an $x$-monotone polygon with a single horizontal segment as one of its two chains) and the problem of guarding a terrain with guards placed on a horizontal line above the terrain appear to be problems somewhere between the 1.5D \ac{TGP} and the \ac{AGP} in monotone polygons. We show that, surprisingly, both problems allow for a polynomial time algorithm: a simple sweep.

Moreover, we are able to construct a maximum ``witness set'' (i.e., a set of points with pairwise-disjoint visibility polygons) of the same cardinality as the minimum guard set for uni-monotone polygons. Hence, we establish the first non-trivial class of ``perfect polygons'' \cite{mrs-copsp-90}, which are exactly the polygons in which the size of the minimum guarding set is equal to the size of the maximum witness set (the only earlier results concerned ``rectilinear visibility'' \cite{wk-pdoag-07} and ``staircase visibility''~\cite{mrs-copsp-90}). Since no guard can see two witness points, for any witness set $W$ and any guard set $G$, $|W|\leq|G|$ holds; in particular, if we have equality, then $G$ is a smallest-cardinality guard set (solution to the guarding problem).

One application of guarding a terrain with guards placed on a horizontal line above the terrain, the \ac{ATGP}, comes from the idea of using drones to surveil a complete geographical area. Usually, these drones will not be able to fly arbitrarily high, which motivates us to cap the allowed height for guards (and without this restriction a single sufficiently high guard above the terrain will be enough). Of course, eventually we are interested in working in two dimensions and a height, the 2.5D \ac{ATGP}.
One dimension and height, the \ac{ATGP}, is a natural starting point to develop techniques for a 2.5D \ac{ATGP}. However, the 2.5D \ac{ATGP}---in contrast to the 1.5D \ac{ATGP}---is NP-hard by a straight-forward reduction from the (2D) \ac{AGP}: we construct a terrain such that we carve out a hole for the polygon's interior and need to guard it from the altitude line at the ``original'' height, then we do need to find the minimum guard set for the polygon.


\paragraph*{Roadmap} In the remainder of this section we review related work.
In Section~\ref{sec:prob} we formally introduce our problems and necessary definitions, and we give some basic properties of our problems. In Section~\ref{sec:alg} we present our algorithm, prove that it computes an optimal guard set and that uni-monotone polygons are perfect; we also extend that result to monotone mountains (uni-monotone polygons in which the segment-chain is not neces- sarily horizontal). We show how we can obtain a runtime of $O(n^2 \log n)$; Section~\ref{subsec:improved} shows how to find the optimal guard set in linear time (since the faster algorithm does not show the perfectness, we also keep in the slower algorithm). Finally, we conclude in Section~\ref{sec:conclusion}.

\subsection{Related work}\label{sec:rw}While the TGP is quite a restricted version of the guarding problem, it is far from trivial, and understanding it is an essential step in attacking the full 2.5D terrain setting. Our work continues the line of many papers on 1.5D terrains, published during the last 10 years; below we survey some of the earlier work.

Research first focused on approximation algorithms, because NP-hardness was generally assumed, but had not been established. Ben-Moshe et al.~\cite{bkm-acfaafotg-07} presented a first constant-factor approximation for the discrete vertex guard problem version (that is, guards may not lie anywhere on $T$, but are restricted to terrain vertices). This approximation algorithm constituted a building block for an O(1)-approximation of the continuous version, where guards can have arbitrary locations on $T$, the Continuous Terrain Guarding Problem (CTGP). Ben-Moshe et al. did not state the approximation factor, King~\cite{k-4aagdt-06} later claimed it to be a 6-approximation (with minor modifications).
Clarkson and Varadarajan~\cite{cv-iaags-07} presented a constant-factor approximation based on $\varepsilon$-nets and Set Cover, King~\cite{k-4aagdt-06,k-eaag} gave a 5-approximation (first published as a 4-approximation, he corrected a flaw in the analysis in the errata).
Various other, improved approximation algorithms have been presented: Elbassioni et al.~\cite{ekmms-iag15-11} obtained a 4-approximation for the CGTP. Gibson et al.~\cite{mkkv-astg-09,gkkv-gtvls-14}, presented a Polynomial Time Approximation Scheme (PTAS) for a finite set of guard candidates. Only in 2010, after all these approximation results were published, NP-hardness of both the discrete and the continuous TGP was established by King and Krohn in the 2010 conference version of~\cite{kk-tginph-11}. Khodakarami et al.~\cite{kdm-fpagt-15} showed that the TGP is fixed-parameter tractable w.r.t. the number of layers of upper convex hulls induced by a terrain. Martinovi{\'c} et~al.~\cite{mms-epiaagt-15} proposed an approximate solver for the discrete \ac{TGP}: they compute 5.5- and 6-approximations given the knowledge about pairwise visibility of the vertices as input.
Friedrichs et al.~\cite{fhks-c15tgp-16} showed that the CTGP has a discretization of polynomial size. As the CTGP is known to be NP-hard, and Friedrichs et al. can show membership in NP, this also shows NP-completeness. And from the Polynomial Time Approximation Scheme (PTAS) for the discrete TGP from Gibson et al.~\cite{gkkv-gtvls-14} follows that there is a PTAS for the CTGP.

Eidenbenz~\cite{e-aatg-02} considered the problem of monitoring a 2.5D terrain from guards on a plane with fixed height value (which lies entirely above or partially on the terrain). He presented a logarithmic approximation for the additional restriction that each triangle in the triangulation of the terrain must be visible from only a single guard.

Hurtado et~al.~\cite{hlmssss-tvwmv-14} presented algorithms for computing visibility regions in~1.5D and~2.5D terrains.

Perfect polygons were defined by Amit et al.~\cite{amit2010locating} in analogy with the
concept of perfect graphs (introduced by Berge~\cite{b-fg-61} in the 1960s): graphs in which for every induced subgraph the clique number equals the chromatic number. The only earlier results on perfect polygons concerned so-called $r$-visibility (or rectangular vision) and $s$-visibility (or ``staircase'' visibility). For $r$-visibility two points $p$ and $q$ see each other if the rectangle spanned by $p$ and $q$ is fully contained in the polygon, for $s$-visibility a staircase path between $p$ and $q$ implies visibility. Worman and Keil~\cite{wk-pdoag-07} considered the \ac{AGP} under $r$-visibility in orthogonal polygons and showed that these polygons are perfect under $r$-visibility; Motwani et al.~\cite{mrs-copsp-90} obtained similar results for $s$-visibility.

In his PhD Dissertation~\cite{nilsson} Bengt Nilsson presented a linear-time algorithm to compute an optimal set of vision points on a watchman route in
a {\em walkable polygon}, a special type of simple polygon that encompasses spiral and monotone polygons. Being developed for a more general type of polygon, rather than a uni-modal polygon, his algorithm is non-trivial and its proof of correctness and optimality is complex. In contrast, our algorithm is simple and elegant, and allows to construct a witness set of equal cardinality.  
In Section~\ref{subsec:improved} we make some observations on the visibility characterizations that allow us to obtain a simple, greedy, linear-time algorithm. 

\section{Notation, Preliminaries, and Basic Observa- tions}\label{sec:prob}

In this paper we deal only with \emph{simple} polygons, so the term ``polygon'' will mean ``simple polygon''. A polygon $P$ is a simply-connected region whose boundary is a polygonal cycle; we assume that $P$ is a closed set, i.e., that its boundary belongs to $P$. Unless specified otherwise, $n$ will denote the number of vertices of $P$.

A simple polygon $P$ is \emph{x-monotone} (Figure~\ref{fig:monot}, left) if the intersection $\ell\cap P$ of $P$ with any vertical line $\ell$ is a single segment (possibly empty or consisting of just one point). 
It is easy to see that the boundary of a monotone polygon $P$ decomposes into two chains between the rightmost and leftmost points of $P$.
\begin{figure}\centering\hfil\includegraphics[page=1]{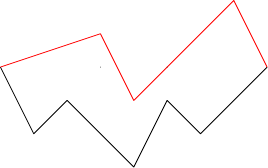}\hfil\includegraphics[page=2]{monot}\hfil\caption{Left: An $x$-monotone polygon; the upper chain is red. Right: A uni-monotone polygon.}\label{fig:monot}\end{figure}
\begin{definition}An $x$-monotone polygon $P$ is \e{uni-monotone} if one of its two chains is a single horizontal segment $\mathcal H$ (Figure~\ref{fig:monot}, right).\end{definition}
W.l.o.g.~we will assume that $\cal H$ is the upper chain. We denote the lower chain of $P$ by $LC(P)$. The vertices of $LC(P)$ are denoted by $V(P) = \{ v_1, \dots, v_n \}$ from left to right, and the edges by $E(P) = \{ e_1, \dots, e_{n-1} \}$ with $e_i = \overline{v_i v_{i+1}}$. 

A point $p \in P$ \emph{sees} or \emph{covers} $q \in P$ if $\overline{pq}$ is contained in $P$. Let $\V_P(p)$ denote the \emph{visibility polygon} (VP) of $p$, i.e., $\V_P(p) := \{ q \in P \mid \textnormal{$p$ sees $q$} \}$. For $G \subset P$ we abbreviate $\V_P(G) := \bigcup_{g \in G} \V_P(g)$. The \e{Art Gallery Problem (AGP)} for $P$ is to find a minimum-cardinality set $G\subset P$ of points (called \e{guards}) that collectively see all of $P$.

We now define the other object of our focus -- terrains and altitude guarding. Say that a polygonal chain is \e{$x$-monotone} if any vertical line intersects it in at most one point.
\begin{definition}A \emph{terrain $T$} is an $x$-monotone polygonal chain.\end{definition}
For instance, the lower chain $LC(P)$ of a uni-monotone polygon is a terrain. We thus reuse much of the notation for the lower chains: the vertices of $T$ are denoted by $V(T) = \{ v_1, \dots, v_n \}$ from left to right, and the edges by $E(T) = \{ e_1, \dots, e_{n-1} \}$ where $e_i = \overline{v_i v_{i+1}}$ and $n:=|V(T)|$. The \e{relative interior} of an edge $e_i$ is $\interior(e_i) := e_i \setminus \{v_i, v_{i+1}\}$; we will say just ``interior'' to mean ``relative interior''. 
For two points $p,q \in T$, we write $p \leq q$ ($p < q$) if $p$ is (strictly) left of $q$, i.e., has a (strictly) smaller $x$-coordinate.

\begin{definition}
An {\it altitude line} $\A$ for a terrain $T$ is a horizontal segment located above $T$ (that is, the $y$-coordinate of all vertices of $T$ is smaller than the $y$-coordinate of $\A$), with the leftmost point vertically above $v_1$ and the rightmost point vertically above $v_n$, see Figure~\ref{fig:tgap}.\end{definition} 
We adopt the same notation for points on $\A$ as for two points on $T$: for $p,q \in\mathcal{A}$, we write $p \leq q$ ($p < q$) if $p$ is (strictly) left of $q$, i.e., has a (strictly) smaller $x$-coordinate.


\begin{figure}
\centering\includegraphics{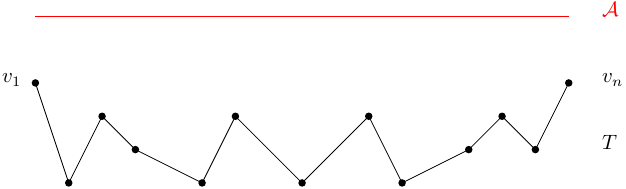}
  \caption{\small A terrain $T$ in black (the vertices are the solid circles) and an altitude line $\mathcal{A}$ in red.}
  \label{fig:tgap}
\end{figure}

A point $p \in \A$ \emph{sees} or \emph{covers} $q \in T$ if $\overline{pq}$ does not have crossing intersection with $T$. Let $\V_T(p)$ denote the \emph{visibility region} of $p$, i.e., $\V_T(p) := \{ q \in T \mid \textnormal{$p$ sees $q$} \}$. For $G \subseteq \mathcal{A}$ we abbreviate $\V_T(G) := \bigcup_{g \in G} \V_T(g)$. We symmetrically define the \emph{visibility region} for $q \in T$:  $\V_T(q) := \{ p \in \mathcal{A} \mid \textnormal{$q$ sees $p$} \}$. The \e{Altitude Terrain Guarding Problem (ATGP)} for $P$ is to find a minimum-cardinality set $G\subset\A$ of points (called \e{guards}) that collectively see all of $T$.

We now define the ``strong'' and ``weak'' visibility for \e{edges} of polygons and terrains:
\begin{definition}
For an edge $e\in P$ or $e \in T$ the \emph{strong visibility polygon} is the set of points that see all of $e$; the polygons are denoted by
$\V_P^s(e) := \{p \in P: \forall q\in e ; \textnormal{$p$ sees $q$} \}$ and $\V_T^s(e) := \{p \in\mathcal{A}: \forall q\in e ; \textnormal{$p$ sees $q$} \}$.
The \emph{weak visibility polygon} of an edge $e$ is the set of points that see at least one point on $e$; the notation is
$\V_P^w(e) := \{p \in P: \exists q\in e ; \textnormal{$p$ sees $q$} \}$ and $\V_T^s(e) := \{p \in\mathcal{A}: \exists q\in e ; \textnormal{$p$ sees $q$} \}$ .
\end{definition}

Last but not least, we recall definitions of witness sets and perfect polygons \cite{amit2010locating,mrs-copsp-90}.

\begin{definition}
A set $W\subset P$ ($W\subset T$) is a \emph{witness set} if $\forall\; w_i\neq w_j\in W$ we have $\V_P(w_i)\cap \V_P(w_j)=\emptyset$. 
A \emph{maximum witness set $W_{opt}$} is a witness set of maximum cardinality, $|W_{opt}|=\max\{|W|:\text{witness set}\; W\}$.
\end{definition}

\begin{definition}
A polygon class $\mathcal{P}$ is \emph{perfect} if the cardinality of an optimum guard set and the cardinality of a maximum witness set coincide for all polygons $P\in\mathcal{P}$.
\end{definition}



The following two lemmas show that for guarding uni-monotone polygons we only need guards on $\mathcal{H}$, and coverage of $LC(P)$ is sufficient to guarantee coverage of the entire polygon. Hence, the \acf{ATGP} and the \acf{AGP} in uni-monotone polygons are equivalent.

\begin{lemma}\label{lem:uni-mon-H}
Let $P$ be a uni-monotone polygon, with optimal guard set $G$. Then there exists a guard set $G^\mathcal{H}$ with $|G|=|G^\mathcal{H}|$ and $g\in\mathcal{H}$ for all $g\in G^\mathcal{H}$. That is, if we want to solve the \ac{AGP} for a uni-monotone polygon, w.l.o.g.~we can restrict our guards to be located on $\mathcal{H}$.
\end{lemma}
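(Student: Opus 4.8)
The plan is to show that any guard $g$ in an optimal guard set $G$ can be "lifted" straight up to a point $g^{\mathcal{H}} \in \mathcal{H}$ without losing any coverage of the lower chain $LC(P)$. Since the upper chain is the single horizontal segment $\mathcal{H}$, for any guard $g$ strictly below $\mathcal{H}$ there is a well-defined vertical projection $g^{\mathcal{H}}$ onto $\mathcal{H}$ (this exists by $x$-monotonicity, as the vertical line through $g$ meets $\mathcal{H}$). I would then define $G^{\mathcal{H}} := \{ g^{\mathcal{H}} \mid g \in G \}$, note $|G^{\mathcal{H}}| \le |G|$, and argue it remains feasible; combined with optimality of $G$ this forces equality $|G^{\mathcal{H}}| = |G|$.

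The core geometric step is the visibility claim: if $g$ sees a point $q \in LC(P)$, then so does $g^{\mathcal{H}}$. I would prove this by a convexity/containment argument on the relevant sub-region of $P$. Consider the segment $\overline{g q}$, which lies in $P$ by hypothesis, together with the vertical segment $\overline{g\, g^{\mathcal{H}}}$, which lies in $P$ because $P$ is $x$-monotone and $g$ lies on the vertical chord below $g^{\mathcal{H}}\in\mathcal{H}$. The triangle with vertices $g^{\mathcal{H}}$, $g$, and $q$ has two of its three edges in $P$; I would argue the third edge $\overline{g^{\mathcal{H}} q}$ also lies in $P$. The cleanest route is to observe that this triangle lies in the region bounded above by $\mathcal{H}$ and below by $LC(P)$: every point of the triangle sits on a vertical chord of $P$ between $LC(P)$ and $\mathcal{H}$, so it cannot dip below the lower chain. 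Because $\overline{g q}$ stays above $LC(P)$ and the triangle is the convex hull of $g$, $g^{\mathcal{H}}$, $q$, the upper edge $\overline{g^{\mathcal{H}} q}$ is sandwiched between $\overline{gq}$ and $\mathcal{H}$, hence stays above $LC(P)$ and inside $P$, giving $g^{\mathcal{H}}$ sees $q$.

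Once the per-point visibility claim holds for all $q \in LC(P)$, I get $LC(P) \subseteq \V_P(G^{\mathcal{H}})$; and since every point of $P$ lies on a vertical chord whose endpoint on $LC(P)$ is seen, full coverage of $P$ by $G^{\mathcal{H}}$ follows (one may also defer this to the companion lemma stating that covering $LC(P)$ suffices). Thus $G^{\mathcal{H}}$ is feasible with $|G^{\mathcal{H}}| \le |G|$, and optimality of $G$ yields $|G^{\mathcal{H}}| = |G|$ with all guards on $\mathcal{H}$.

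**The main obstacle** will be making the containment argument for the third triangle edge fully rigorous: I must be careful that the vertical projection $g^{\mathcal{H}}$ really lands on $\mathcal{H}$ (rather than running off its $x$-extent) and that "staying above $LC(P)$" is argued cleanly rather than hand-waved. The convexity of the lower chain is \emph{not} guaranteed, so I cannot simply invoke convexity of $P$; instead the argument must rely on the fact that $\overline{gq}$ already lies above $LC(P)$ and that pushing the apex upward to $\mathcal{H}$ only moves the segment further from the lower chain along each vertical line. Formalizing this monotone "upward sweep" of the segment is where I expect the real work to be.
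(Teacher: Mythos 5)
Your proposal is correct and follows essentially the same argument as the paper: project each guard vertically onto $\mathcal{H}$ and observe that the lifted segment $\overline{g^{\mathcal{H}}q}$ lies above $\overline{gq}$, hence above $LC(P)$, so no visibility is lost. The only cosmetic difference is that the paper proves $\V_P(g)\subseteq\V_P(g^{\mathcal{H}})$ directly for every point seen by $g$, whereas you restrict to points of $LC(P)$ and defer full coverage to the companion lemma; both routes are valid.
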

\begin{proof}
Consider any optimal guard set $G$, let $g\in G$ be a guard not located on $\mathcal{H}$. Let $g^\mathcal{H}$ be the point located vertically above $g$ on $\mathcal{H}$. 
Let $p\in \V_P(g)$ be a point seen by $g$.
W.l.o.g. let $p$ be located to the left of $g$ (and $g^\mathcal{H}$), that is, $x(p)<x(g)$, where $x(p)$ is the $x$-coordinate of a point $p$ (Figure~\ref{fig:g-gH}). As $g$ sees $p$, the segment $\overline{pg}$ does not intersect the polygon boundary, that is, the lower chain of $P$ ($LC(P)$) is nowhere located above $\overline{pg}$: for a point $q\in LC(P)$ let $\overline{pg}(q)$ be the point on $\overline{pg}$ with the same $x$-coordinate as $q$, then $\forall q\in LC(P), x(p)\leq x(q)\leq x(g)$ we have $y(q)\leq y(\overline{pg}(q))$.
Since $\overline{pg^\mathcal{H}}$ is above $\overline{pg}$, we have that $\overline{pg^\mathcal{H}}$ is also above $LC(P)$ and hence $p$ is seen by $g^\mathcal{H}$ as well.
That is, we have $\V_P(g) \subseteq \V_P(g^\mathcal{H})$, and substituting all guards with their projection on $\mathcal{H}$ does not lose coverage of any point in the polygon, while the cardinality of the guard set stays the same.
\end{proof}

\begin{figure}
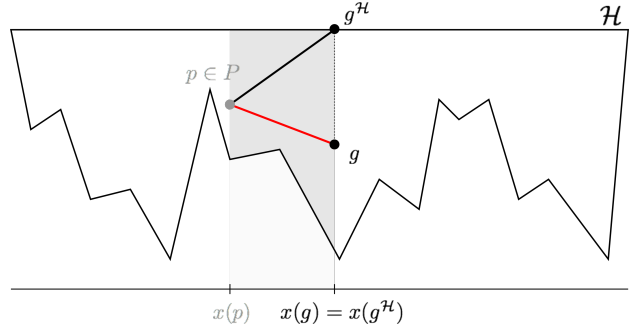

\centering
\comicII{.6\textwidth}{g-gH}
  \caption{\small A uni-monotone polygon $P$. $g\in G$ is a guard not located on $\mathcal{H}$ and $g^\mathcal{H}$ is the point located vertically above $g$ on $\mathcal{H}$. As $g$ sees $p$, $g^\mathcal{H}$ sees $p$ as well.}
  \label{fig:g-gH}
\end{figure}

An analogous proof shows that in the terrain guarding, we can always place guards on the altitude line $\mathcal{A}$ even if we would be allowed to place them anywhere between the terrain $T$ and $\mathcal{A}$.

\begin{lemma}\label{lem:uni-mon-allP}
Let $P$ be a uni-monotone polygon, let $G\subset \cal H$ be a guard set that covers $LC(P)$, that is, $LC(P)\subset \V_P(G)$. Then $G$ covers all of $P$, that is, $P\subseteq \V_P(G)$.
\end{lemma}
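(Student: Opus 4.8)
The plan is to show that every point $p\in P$ is seen by some guard in $G$, thereby reducing coverage of all of $P$ to the assumed coverage of $LC(P)$. First I would fix an arbitrary $p\in P$ and let $p^{LC}$ denote the point of $LC(P)$ directly below $p$, i.e.\ the unique point of $LC(P)$ with $x(p^{LC})=x(p)$; this is well defined because $P$ is $x$-monotone with $\mathcal{H}$ as its (horizontal) upper chain, so the vertical line through $p$ meets $P$ in a single segment running from $p^{LC}$ up to $\mathcal{H}$, and $p$ lies on that segment. By hypothesis $G$ covers $LC(P)$, so there is a guard $g\in G\subseteq\mathcal{H}$ with $p^{LC}\in\V_P(g)$; I will argue that this same $g$ also sees $p$. (If $p=p^{LC}$ there is nothing to show, so assume $y(p)>y(p^{LC})$.)

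The key step is to compare the two segments $\overline{g\,p^{LC}}$ and $\overline{g\,p}$, which share the endpoint $g\in\mathcal{H}$. Assume w.l.o.g.\ that $x(p)<x(g)$ (the case $x(p)>x(g)$ is symmetric, and $x(p)=x(g)$ makes $\overline{g\,p}$ vertical and contained in $P$ by monotonicity). Since $g$ sees $p^{LC}$, the segment $\overline{g\,p^{LC}}$ is nowhere below $LC(P)$ over the interval $[x(p),x(g)]$. Both $\overline{g\,p}$ and $\overline{g\,p^{LC}}$ are graphs of affine functions of $x$ on this interval, they agree at $x=x(g)$ (both equal $y(g)$), and at $x=x(p)$ we have $y(p)>y(p^{LC})$. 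Hence their pointwise difference is affine, vanishes at $x(g)$, and is positive at $x(p)$, so it is nonnegative throughout $[x(p),x(g)]$; that is, $\overline{g\,p}$ lies weakly above $\overline{g\,p^{LC}}$, and therefore nowhere below $LC(P)$, on this interval. Moreover both endpoints of $\overline{g\,p}$ satisfy $y\le y(\mathcal{H})$ and $\mathcal{H}$ is horizontal, so $\overline{g\,p}$ never rises above $\mathcal{H}$. Thus $\overline{g\,p}$ stays inside $P$, which means $g$ sees $p$.

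Since $p\in P$ was arbitrary, this yields $P\subseteq\V_P(G)$, as claimed. The part I expect to be the main obstacle \dash--- and would write most carefully \dash--- is the domination claim that $\overline{g\,p}$ stays above $\overline{g\,p^{LC}}$ across the whole $x$-range between $p$ and $g$: it rests on both segments being affine graphs over that interval (so their difference is affine and attains no interior extremum), together with the boundary comparison $y(p)>y(p^{LC})$ and the equality at $g$. One should also verify, via $x$-monotonicity, that the portion of $LC(P)$ with $x$-coordinate in $[x(p),x(g)]$ is exactly the part over which the visibility of $p^{LC}$ from $g$ certifies that $\overline{g\,p^{LC}}$ is non-below the lower chain, so that the conclusion $\overline{g\,p}\subseteq P$ is justified on precisely the relevant interval.
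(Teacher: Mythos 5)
Your proof is correct and follows essentially the same route as the paper's: drop $p$ vertically to $p^{LC}\in LC(P)$, take a guard $g\in\mathcal{H}$ seeing $p^{LC}$, and observe that $\overline{gp}$ lies between $\overline{gp^{LC}}$ and $\mathcal{H}$, hence inside $P$. The paper phrases this as the emptiness of the triangle $\Delta(g,p,p^{LC})$, which is exactly the domination claim you spell out via the affine comparison of the two segments.
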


\begin{proof}
Let $p\in P, \; p \notin LC(P)$ be a point in $P$. Consider the point $p^{LC}$ that is located vertically below $p$ on $LC(P)$. Let $g\in G$ be a guard that sees  $p^{LC}$ (as $p^{LC}\in LC(P)$ and $LC(P)\subset \V_P(G)$, there exists at least one such guard, possibly more than one guard in $G$ covers $p^{LC}$), see Figure~\ref{fig:LC-P}. $LC(P)$ does not intersect the line $\overline{p^{LC}g}$, and because $P$ is uni-monotone the triangle $\Delta(g,p,p^{LC})$ is empty, hence, $g$ sees $p$.
\end{proof}

\begin{figure}
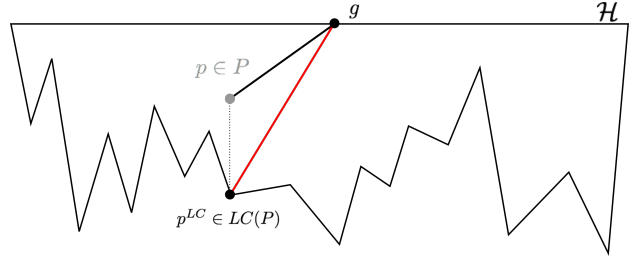

\centering
\comicII{.6\textwidth}{LC-P}
  \caption{\small A uni-monotone polygon $P$. The guard $g\in\mathcal{G}$ sees $p^{LC}$ the point on $LC(P)$ vertically below $p$. $LC(P)$ does not intersect $\overline{p^{LC}g}$ and $P$ is uni-monotone, hence, $g$ sees $p$. }
  \label{fig:LC-P}
\end{figure}

Consequently, the \ac{ATGP} and the \ac{AGP} for uni-monotone polygons are equivalent; we will only refer to the \ac{ATGP} in the remainder of this paper, with the understanding that all our results can be applied directly to the \ac{AGP} for uni-monotone polygons.

The following lemma shows a general property of guards on the altitude line, which we will use (in parts implicitly) in several cases; it essentially says that if a guard cannot see a point to its right, no guard to its left will help him by covering this point (this lemma is very much related to the well-known ``order claim'' \cite{bkm-acfaafotg-07}, though the order claim holds for guards located on the terrain):
\begin{lemma}\label{le:no-help}
Let $g\in\mathcal{A}, p\in T, g<p$. If $p\notin\V_T(g)$ then $\forall g'<g, g'\in\mathcal{A}: p\notin\V_T(g')$.
\end{lemma}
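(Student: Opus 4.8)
The plan is to view the terrain as a height function of its $x$-coordinate, which is well-defined by monotonicity, and to reduce visibility to a pointwise comparison between a connecting segment and the terrain. Writing $h$ for the common height of $\A$, a guard $a\in\A$ with $a<p$ sees $p\in T$ precisely when the segment $\overline{ap}$ is nowhere below $T$, i.e.\ when its height at every abscissa in $[x(a),x(p)]$ is at least the height of $T$ there. So the hypothesis $p\notin\V_T(g)$ gives a \emph{blocking abscissa} $x^\ast\in[x(g),x(p)]$ at which $T$ rises strictly above $\overline{gp}$, and the goal is to produce such a blocking abscissa for every $g'<g$.

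The first step I would take is to rule out the degenerate case $x^\ast=x(p)$: at $x(p)$ both the segment $\overline{gp}$ and the terrain attain height $y(p)$, so the strict inequality $T(x^\ast)>\overline{gp}(x^\ast)$ forces $x^\ast<x(p)$. The key step is then a slope comparison showing that, throughout $[x(g),x(p)]$, the segment $\overline{g'p}$ lies weakly below $\overline{gp}$. Both segments end at the common point $p$, and both left endpoints sit at the same height $h>y(p)$; since $g'$ is farther to the left, the slope $\tfrac{y(p)-h}{x(p)-x(g')}$ of $\overline{g'p}$ is negative with smaller magnitude than the slope $\tfrac{y(p)-h}{x(p)-x(g)}$ of $\overline{gp}$. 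Hence, scanning leftward from $p$, the shallower segment $\overline{g'p}$ gains height more slowly, so $\overline{g'p}(x)\le\overline{gp}(x)$ for all $x<x(p)$. This is the only genuine computation, and it is elementary.

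Combining the two observations finishes the argument: at $x^\ast$ we have $T(x^\ast)$ strictly above $\overline{gp}(x^\ast)$, which in turn is at or above $\overline{g'p}(x^\ast)$; moreover $x^\ast$ lies in $[x(g'),x(p)]$ because $[x(g),x(p)]\subseteq[x(g'),x(p)]$. Thus $\overline{g'p}$ dips below $T$ at $x^\ast$, so $p\notin\V_T(g')$. The only point demanding care is the boundary case $x^\ast=x(p)$, which the first step dispenses with; beyond that there is no real obstacle, as the claim is essentially a monotonicity statement about the heights of nested segments sharing the endpoint $p$.
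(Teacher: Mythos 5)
Your proof is correct and is essentially the paper's argument in contrapositive form: the paper assumes $g'$ sees $p$ and deduces that $\overline{gp}$ lies inside the empty triangle $\Delta(g',p,p^{\mathcal{A}})$, which is exactly your pointwise inequality $\overline{g'p}(x)\le\overline{gp}(x)$ on $[x(g),x(p)]$. Your explicit slope computation and the dismissal of the boundary case $x^\ast=x(p)$ are both sound, so nothing further is needed.
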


\begin{proof}
We show that if there exists $g'\in\mathcal{A}, g'<g$ which covers $p$, then $g$ also covers $p$; see Figure~\ref{fig:Le13} for an illustration of the proof. Since $g'$ covers $p$, the segment $\overline{g'p}$ lies on or over $T$, and the triangle $\Delta(g',p,p^{\mathcal{A}})$, with $p^{\mathcal{A}}$ being the point located vertically above $p$ on $\mathcal{A}$, is empty. We have $g'<g<p$, and as $x(p)=x(p^{\mathcal{A}})$ we have $g'<g<p^{\mathcal{A}}$. Hence, $\overline{gp}$ is fully contained in the triangle $\Delta(g',p,p^{\mathcal{A}})$, and lies on or over $T$, that is, $g$ sees $p$.
\end{proof}
\begin{figure}
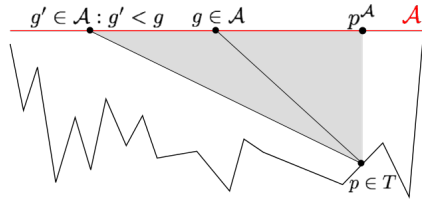

\centering
\comicII{.4\textwidth}{Le13-new}
  \caption{\small $p\in\V_T(g')$: the gray triangle $\Delta(g',p,p^{\mathcal{A}})$ is empty and so $p\in\V_T(g)$.}
  \label{fig:Le13}
\end{figure}

Before we present our algorithm, we conclude this section with an observation that clarifies that guarding a terrain from an altitude is intrinsically different from terrain guarding, where the guards have to be located on the terrain itself. We repeat (and extend) a definition from~\cite{fhks-c15tgp-16}:

\begin{definition}
For a feasible guard cover $C$ of $T$ ($C\subset T$ for terrain guarding and $C\subset\mathcal{A}$ for terrain guarding from an altitude), an edge $e \in E$ is \emph{critical w.r.t.\ $g \in C$} if $C \setminus \{ g \}$ covers some part of, but not all of the interior of $e$. If $e$ is critical w.r.t.\ some $g \in C$, we call $e$ a \emph{critical edge}.

That is, $e$ is critical if and only if more than one guard is responsible for covering its interior.

$g \in C$ is a \emph{left-guard (right-guard)} of $e_i \in E$ if $g < v_i$ ($v_{i+1} < g$) and $e_i$ is critical w.r.t.~$g$.
	We call $g$ a \emph{left-guard (right-guard)} if it is a left-guard (right-guard) of some $e \in E$.
\end{definition}

\begin{figure}
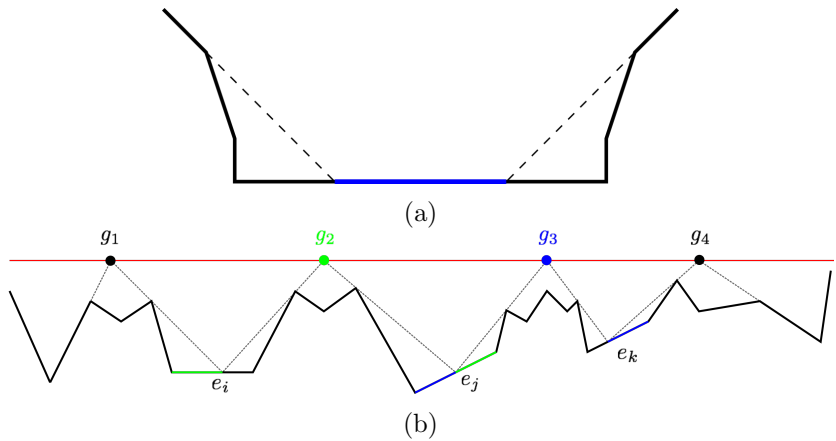

\centering
\comic{.5\textwidth}{nonVertex}{(a)}
\comic{.8\textwidth}{left-right-guard}{(b)}
  \caption{\small (a) This terrain needs two vertex- but only one non-vertex guard~\cite{bkm-acfaafotg-07}. (b) A terrain shown in black and an altitude line $\mathcal{A}$ shown in red. Four guards, $g_1,\ldots, g_4$, of an optimal guard cover are shown as points. The green and the blue guard are both responsible for covering a critical edge both to their left and to their right: $g_2$ for both $e_i$ and $e_j$, and $g_3$ for both $e_j$ and $e_k$. }
  \label{fig:left-and-right}
\end{figure}

\begin{observation}
For terrain guarding we have: any guard that is not placed on a vertex, cannot be both a left- and a right-guard~\cite{fhks-c15tgp-16}. (Note that a minimum guard set may need to contain guards that are not placed on vertices, see Figure~\ref{fig:left-and-right}(a).) However, for guarding a terrain from an altitude, a guard may be responsible to cover critical edges both to its left and to its right, that is, guards may be both a left- and a right-guard, see Figure~\ref{fig:left-and-right}(b).
\end{observation}

The observation suggests that guarding terrain from an altitude line (ATGP) could be more involved than terrain guarding (from the terrain itself), as in ATGP a guard may have to cover both left and right. However, while terrain guarding is NP-hard~\cite{kk-tginph-11}, in this paper we prove that ATGP is solvable in polynomial time. 

\section{Sweep Algorithm}\label{sec:alg}
Our algorithm is a sweep, and informally it can be described as follows:
We start with an empty set of guards, $G=\emptyset$, and at the leftmost point of~$\mathcal{A}$; all edges $E(T)$ are completely unseen.
We sweep along $\mathcal{A}$ from left to right and place a guard $g_i$ (and add $g_i$ to $G$) whenever we could no longer see all of an edge $e'$ if we would move more to the right. 
We compute the visibility polygon of $g_i$, $\V_T(g_i)$, and for each edge $e=\{v,w\}$ partially seen by $g_i$ ($v\notin \V_T(g_i), w\in \V_T(g_i)$), we split the edge, and only keep the open interval that is not yet guarded.
Thus, whenever we insert a new guard $g_i$ we have a new set of ``edges'' $E_i(T)$ that are still completely unseen, and $\forall f\in E_i(T)$ we have $f\subseteq e\in E(T)$.
We continue placing new guards until $T \subseteq \V_T(G)$.
We show that there is a witness set of size $|G|$, implying that our guard set is optimal: we place a witness on $e'$ at the point where we would lose coverage if we did not place the guard $g_i$.

In the remainder of this section we:
\begin{itemize}
\item Describe how we split partly covered edges in Subsection~\ref{subsec:split}.
\item Formalize our algorithm in Subsection~\ref{subsec:alg}.
\item Prove that our guard set is optimal, and how that proves that uni-monotone polygons are perfect in Subsections~\ref{subsec:minguard} and~\ref{subsec:perf}.
\item Show how that results extends to monotone mountains in Subsection~\ref{subsec:monmount}.
\item Show how we can efficiently preprocess our terrain, and that we obtain an algorithm runtime of $O(n^2 \log n)$ in Subsection~\ref{subsec:runtime}.
\item Show how we can improve the runtime to $O(n)$ in Subsection~\ref{subsec:improved}.
\end{itemize}

\subsection{How to Split the Partly Seen Edges}\label{subsec:split}
For each edge $e\in E(T)$ in the initial set of edges we need to determine the point $p_e^c$ that closes the interval on $\mathcal{A}$ from which all of $e$ is visible.
We denote the set of points $p_e^c$ for all $e\in E(T)$ as the set of closing points $\mathcal{C}$,  
that is,
\begin{displaymath}\mathcal{C}=\bigcup_{e\in E(T)} \{p_e^c \in\mathcal{A} : \left( e\subseteq\V_T(p_e^c)\right) \wedge \left( e\nsubseteq\V_T(p)\;\forall p>p_e^c,\; p\in\mathcal{A}\right)\}.\end{displaymath}
The points in $\mathcal{C}$ are the rightmost points on $\mathcal{A}$ in the strong visibility polygon of the edge $e$, for all edges. Analogously, we define the set of opening points $\mathcal{O}$: for each edge the leftmost point $p_e^o$ on $\mathcal{A}$, such that $e\subseteq\V_T(p_e^o)$, \begin{displaymath}\mathcal{O}=\bigcup_{e\in E(T)} \{p_e^o \in\mathcal{A}: \left( e\subseteq\V_T(p_e^o)\right) \wedge \left( e\nsubseteq\V_T(p)\;\forall p<p_e^o, \;p\in\mathcal{A}\right)\}.\end{displaymath} For each edge $e$ the point in $\mathcal{O}$ is the leftmost point on $\mathcal{A}$ in the strong visibility polygon of $e$.

Moreover, whenever we place a new guard, we need to split partly seen edges to obtain the new, completely unseen, possibly open, interval, and determine the point on $\mathcal{A}$ where we would lose coverage of this edge (interval). That is, whenever we split an edge we need to add the appropriate point to~$\mathcal{C}$.

To be able to easily identify whether an edge $e$ of the terrain needs to be split due to a new guard $g$, we define the set of ``soft openings''
\begin{displaymath}\mathcal{S}= \bigcup_{e\in E(T)}\{p_e^s \in\mathcal{A}:\left(\exists q\in e, q\in\V_T(p_e^s)\right) \wedge \left(\nexists q\in e, q \in\V_T(p)\;\forall p<p_e^s, p\in\mathcal{A}\right)\}\end{displaymath}
That is, any point $p_e^s\in\cal S$ is the leftmost point on $\mathcal{A}$ of the weak visibility polygon of some edge $e$: if $g$ is to the right of $p_e^s$ (and to the left of the closing point) the guard can see at least parts of $e$. See Figure~\ref{fig:o-c-so} for an illustration of the closing point, the opening point, and the soft opening point of an edge~$e$.

\begin{figure}
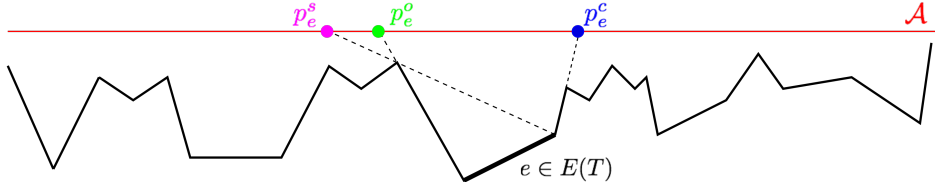

\centering
\comicII{.9\textwidth}{open-close-soft-4}
  \caption{\small The closing point $p_e^c$, the opening point $p_e^o$, and the soft opening point $p_e^s$ for an edge $e\in E(T)$. A guard to the left of $p_e^s$ cannot see any point of $e$, a guard $g$ with $p_e^s\leq g<p_e^o$ can see parts, but not all of $e$, a guard $g$ with $p_e^o\leq g \leq p_e^c$ can see the complete edge $e$, and a guard $g$ with $g>p_e^c$ cannot see all of $e$. }
  \label{fig:o-c-so}
\end{figure}

So, how do we preprocess our terrain such that we can easily identify the point on $\mathcal{A}$ that we need to add to $\mathcal{C}$ when we split an edge?
We make an initial sweep from the rightmost vertex to the leftmost vertex; for each vertex we shoot a ray to all other vertices to its left and mark the points, \emph{mark points}, where these rays hit the edges of the terrain. This leaves us with $O(n^2)$ preprocessed intervals. For each mark point $m$ we store the rightmost of the two terrain vertices that defined the ray hitting the terrain at $m$, let this terrain vertex be denoted by $v_m$. Note that for each edge $e_j=\{v_j, v_{j+1}\}$ with $v_{j+1}$ convex vertex (seen from above the terrain), this includes $v_{j+1}$ as a mark point.

Whenever the placement of a guard $g$ splits an edge $e$ such that the open interval $e'\subset e$ is not yet guarded, see for example Figure~\ref{fig:mark}(a), we identify the first mark, $m_{e'}$ to the right of $e'$ and shoot a ray $r$ from the right endpoint of $e'$ through $v_{m_{e'}}$ (the one we stored with $m_{e'}$). The intersection point of $r$ and $\mathcal{A}$ defines our new closing point $p_{e'}^c$, see Figure~\ref{fig:mark}(b).


\begin{figure}
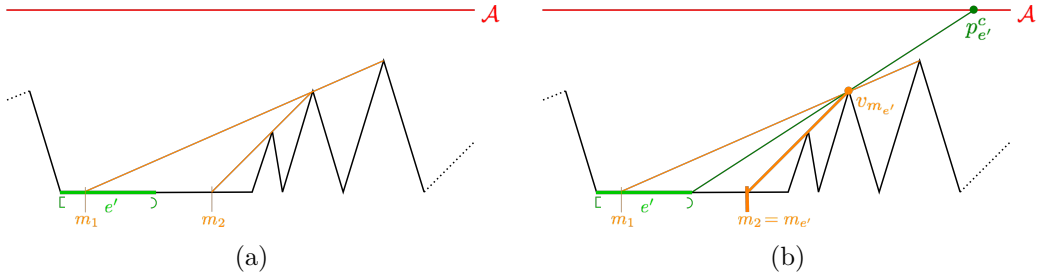

\centering
\comic{.48\textwidth}{mark-1}{(a)}\hfill
\comic{.48\textwidth}{mark-2}{(b)}
  \caption{\small The terrain $T$ is shown in black, the altitude line $\mathcal{A}$ is shown in red. The orange lines show the rays from the preprocessing step, their intersection points with the terrain define the mark points. Assume the open interval $e'$, shown in light green, is still unseen. To identify the closing point for $e'$ we identify the mark point to the right of $e'$, $m_{e'}$, and shoot a ray $r$, shown in dark green, from the right end point of $e'$ through $v_{m_{e'}}$. The intersection point of $r$ and $\mathcal{A}$ defines our new closing point $p_{e'}^c$. }
  \label{fig:mark}
\end{figure}

\subsection{Algorithm Pseudocode}\label{subsec:alg}
The pseudocode for our algorithm is presented in Algorithm~\ref{alg:tgap}. Lines \ref{li:startInit}--\ref{li:endInit} are initialization: we start moving right from the point $a\in\cal A$ above the leftmost vertex, $x_1$, of the terrain (there is no guard there). Lines \ref{li:mainWhile}--end are the main loop of the algorithm: we repeatedly move right to the next closing point and place a guard there. The closing points are maintained in the queue $\cal C$, and an event is deleted from the queue if the new guard happens to fully see the edge (lines \ref{li:startRemove}--\ref{li:endRemove}). The edges that are partially seen by the new guard are split into the visible and invisible parts, and the invisible part is added to the set $E_g$ of yet-to-be-seen edges, together with the closing point for the inserted part-edge (lines \ref{li:processing}--end).

\begin{algorithm}[h!]\caption{Optimal Guard Set for \ac{ATGP}}\DontPrintSemicolon
\label{alg:tgap}
\SetKwInOut{Input}{INPUT}\SetKwInOut{Output}{OUTPUT}\SetKwRepeat{Do}{do}{while}
\Input{ Terrain $T$, altitude line $\mathcal{A}$, its leftmost point $a$, sets $\mathcal{C}, \mathcal{O}, \mathcal{S}$ of closing, opening, and soft opening points for all edges in $T$, all ordered from left to right.}
\Output{An optimal guard set $G$.}
$E_g=E(T)$ \tcp*{set of edges that still need to be guarded}\label{li:startInit}
$i:=1$\;
$g_{0}:=a$ \tcp*{the point on $\mathcal{A}$ before the first guard is $a$, $g_0$ is NOT a guard}\label{li:endInit}
\While{$E_g\neq\emptyset$\tcp*{as long as there are still unseen edges }}{\label{li:mainWhile}
1. Move right from $g_{i-1}$ along $\mathcal{A}$ until a closing point $c\in\mathcal{C}$ is hit\;\label{li:hit}
2. Place $g_i$ on $c$, $G=G\cup\{g_i\}$, $i:=i+1$\;\label{li:place}
3. \For{all $e\in E_g$\tcp*{$g_i\leq p_e^c$ by construction}}{
\If{$p_e^o \leq g_i $}{\label{li:startRemove}
$E_g=E_g\setminus\{e\}$\tcp*{if all of e is seen, delete it from $E_g$}
$\mathcal{C}=\mathcal{C}\setminus\{p_e^c\}$\label{li:endRemove}\tcp*{and delete the closing point from the event queue}}
\Else{ 
\If{$p_e^s\leq g_i$\tcp*{if $g_i$ can see the right point of $e$}}{\label{li:processing}
Shoot a visibility ray from $g_i$ onto $e$\tcp*{We shoot a ray from $g_i$ though all vertices to the right of it, and then check if one of them is the occluding vertex, we use the ray through this occluding vertex}
Let the intersection point be $r_e$\tcp*{all points on $e$ to the right of $r_e$ (incl.~$r_e$) are seen}
Identify the mark $m_e$ immediately to the right of $r_e$ on $e$\;
Shoot a ray $r$ from $r_e$ through $v_{m_e}$\;
Let $p_{e'}^c$ be the intersection point of $r$ and $\mathcal{A}$\tcp*{$p_{e'}^c$ is the closing point for the still unseen interval $e'\subset e$}
$\mathcal{C}=\mathcal{C}\cup\{p_{e'}^c\}\setminus\{p_e^c\}$  \tcp*{insert and delete, keeping queue sorted}
$E_g=E_g\cup\{e'\}\setminus\{e\}$}}\label{li:addPartial}
}}
\end{algorithm}

\subsection{Minimum Guard Set}\label{subsec:minguard}

\begin{lemma}\label{lem:feas}
The set $G$ output by Algorithm~\ref{alg:tgap} is {\it feasible}, that is, $T\subseteq\V_T(G)$.
\end{lemma}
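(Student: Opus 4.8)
The plan is to prove feasibility through a single loop invariant maintained by the \textbf{while}-loop of Algorithm~\ref{alg:tgap}, rather than by any global geometric argument. Writing $G_i = \{g_1,\dots,g_i\}$ for the guards placed after $i$ iterations and $E_g^{(i)}$ for the contents of $E_g$ at that moment, I would show by induction on $i$ that
\[
T \setminus \V_T(G_i) \;\subseteq\; \bigcup_{f \in E_g^{(i)}} f,
\]
i.e.\ every terrain point not yet covered lies in one of the intervals still queued in $E_g$. The base case $i=0$ is immediate: no guard is placed, $E_g^{(0)} = E(T)$, and the union of the edges is all of $T$.

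For the inductive step, fix a point $p \in T$ with $p \notin \V_T(G_i)$; since $G_{i-1}\subseteq G_i$, the hypothesis places $p$ in some interval $f \in E_g^{(i-1)}$, a sub-segment of an original edge. I would then trace $f$ through the \textbf{for}-loop that processes $E_g$ after $g_i$ is placed. First note that by the sweep rule (Steps~1 and~2) the guard $g_i$ sits at the leftmost unconsumed closing point, so $g_i \leq p_f^c$ for every $f$ still in $E_g$; hence the two branches $p_f^o \leq g_i \leq p_f^c$ and $p_f^o > g_i$ are exhaustive. In the first branch all of $f$ is visible from $g_i$, so $p \in f \subseteq \V_T(g_i) \subseteq \V_T(G_i)$, contradicting the choice of $p$; thus this branch cannot apply to $f$. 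In the second branch, if $p_f^s > g_i$ then $g_i$ sees no point of $f$, the interval is left untouched, and $p \in f \in E_g^{(i)}$. Otherwise $p_f^s \leq g_i < p_f^o$ and $f$ is split: by the construction of Subsection~\ref{subsec:split} the part of $f$ visible from $g_i$ is exactly the sub-segment to the right of the ray hit $r_f$, and the retained open interval $e'$ is the remaining left part. Since $p$ is unseen by $g_i$, it must lie to the left of $r_f$, so $p \in e' \in E_g^{(i)}$. In every admissible case $p$ lands in an interval of $E_g^{(i)}$, which establishes the invariant.

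Feasibility then follows at once: the loop terminates only when $E_g = \emptyset$, at which point the invariant gives $T \setminus \V_T(G) \subseteq \emptyset$, i.e.\ $T \subseteq \V_T(G)$. The step I expect to be most delicate is the split branch, and specifically the geometric claim that $g_i$ sees precisely a \emph{right} sub-segment of $f$, so that the retained interval $e'$ genuinely contains \emph{all} still-unseen points of $f$ and not merely some of them; this is where the terrain-from-altitude structure is essential, and it is exactly the content captured by Lemma~\ref{le:no-help} (a point one cannot see to the right cannot be rescued by moving further left) together with the mark-point construction. The only other point requiring care is the exhaustiveness of the two branches, which hinges on $g_i \leq p_f^c$; this holds because each guard is placed at the first (leftmost) unconsumed closing point. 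Termination of the loop, and hence that $G$ is a well-defined finite output, I would defer to the runtime analysis in Subsection~\ref{subsec:runtime}, where the strictly rightward progress of the sweep and the finiteness of the candidate closing points bound the number of iterations.
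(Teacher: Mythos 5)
Your proof is correct and follows essentially the same route as the paper's: both trace an uncovered point through the branches of the \textbf{for}-loop, using that each guard sits at the leftmost pending closing point (so no interval's closing point is overshot), that a guard in $[p_e^o,p_e^c]$ sees all of $e$, and that a split retains exactly the still-unseen left part of the edge. The only difference is presentational\,---\,you state a loop invariant and induct, whereas the paper argues by contradiction that the interval containing an uncovered point would never leave $E_g$; your formulation handles intervals that get split more than once slightly more cleanly, while sharing with the paper the same unproved geometric step (that $g_i$ sees precisely a right sub-segment of a partially seen edge).
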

\begin{proof}
Assume there is a point $p\in T$ with $p\notin \V_T(G)$. For $p$ we have $p\in e$ for some edge $e\in E(T)$. As $p$ is not covered, there exists no guard in $G$ in the interval $[p_e^o, p_e^c]$ on $\mathcal{A}$. Thus, $p_e^c$ is never the event point that defines the placement of a guard in lines 6,7. Moreover, as $\nexists g_i: p_e^o \leq g_i \leq p_e^c$, $e$ is never completely deleted from $E_g$ in lines 10--12. Consequently, for some $i$ we have $p_e^o > g_i$ and $g_i \geq p_e^s$ (lines 14--22). As $p\notin \V_T(G)$, we have $p\in e'\subset e$ ($e'$ being the still unseen interval of $e$).

Again, because $p\notin \V_T(G)$, $\nexists g_j\in [p_e^o, p_{e'}^c]\subset\mathcal{A}, \;j\geq i$. Due to line 6 no guard may be placed to the left of $p_{e'}^c$, hence, there is no guard placed in $[p_e^o, b]$ (where $b$ is the right end point of $\mathcal{A}$). That is, $e'$ is never deleted from $E_g$, a contradiction to $G$ being the output of Algorithm~\ref{alg:tgap}.
\end{proof}

To show optimality, we show that we can find a witness set  $W$ with $|W|=|G|$. We will place a witness for each guard Algorithm~\ref{alg:tgap} places. First, we need an auxiliary lemmas:

\begin{lemma}\label{le:aux}
Let $c\in\mathcal{C}$ be the closing point in line~\ref{li:hit} of Algorithm~\ref{alg:tgap} that enforces the placement of a guard $g_i$. If $c$ is the closing point for a complete edge (and not just an edge interval), then there exists an edge $e_j=\{v_j, v_{j+1}\}\in E(T)$ for which $c$ is the closing point, such that $v_{j+1}$ is a reflex vertex, and $v_j$ is a convex vertex.
\end{lemma}

\begin{proof}We first prove that that there exists an edge $e_j=\{v_j, v_{j+1}\}\in E(T)$ for which $c$ is the closing point, such that $v_{j+1}$ is a reflex vertex.

Assume there is no such edge $e_j$ for which $v_{j+1}$ is a reflex vertex, pick the rightmost edge $e_j$ with $v_{j+1}$ being a convex vertex for which $c$ is the closing point. Let $E_c\subseteq E_g$ be the set of edges (and edge intervals) for which $c$ is the closing point ($e_j\in E_c$). (Recall from Algorithm~\ref{alg:tgap} that $E_g$ is the set of yet-to-be-seen edges---the algorithm terminates when $E_g=\emptyset$; $E_c$ is used only for the proof and is not part of the algorithm.) As $c=p_{e_j}^c$ is the closing point that defines the placement of a guard we have $p_e^c > c$ for all $e\in E_g\setminus E_c$ (all other active closing points are to the right of $c$).
Because $v_{j+1}$ sees $c$:
$\angle(v_j, v_{j+1}, c) \leq \angle(v_j, v_{j+1}, v_{j+2}) < 180^\circ$.
We consider two cases:
\begin{itemize}
\item {\bf Case 1 $\angle(v_j, v_{j+1}, c) = \angle(v_j, v_{j+1}, v_{j+2})$}: In this case, $c$ is the closing point also for $e_{j+1}$. Because $e_j$ is the rightmost edge with its right vertex $v_{j+1}$ being a convex vertex for which $c$ is the closing point, the right vertex of $e_{j+1}$, $v_{j+2}$, must be a reflex vertex. This is a contradiction to having no such edge $e_j$ for which the right vertex is a reflex vertex.
\item {\bf Case 2 $\angle(v_j, v_{j+1}, c) < \angle(v_j, v_{j+1}, v_{j+2})$}: See Figure~\ref{fig:le17}(a) for an illustration of this case. Let $q$ be the closing point for $e_{j+1}$. Then the two triangles $\Delta(v_j, v_{j+1}, c)$ and $\Delta(v_{j+1}, v_{j+2}, q)$ are empty (and we have $c\geq v_{j+1}$ and $q\geq v_{j+2}$). Because $T$ is $x$-monotone also the triangle $\Delta(c,q,v_{j+1})$ is empty, hence, $q\in \V_T^s(e_j)$, a contradiction to $c$ being $e_j$'s closing point.
\end{itemize}

\begin{figure}
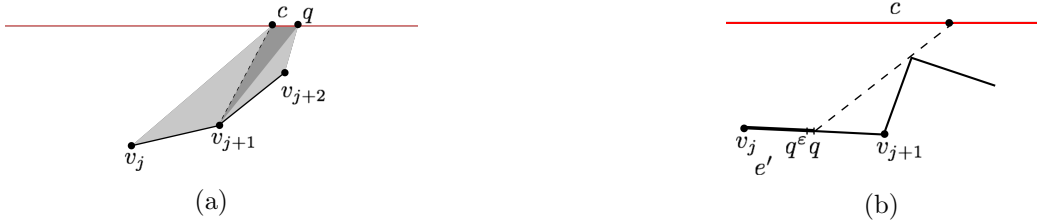

\centering
\comic{.4\textwidth}{Le17}{(a)}\hfill
\comic{.3\textwidth}{Le19}{(b)}\hfill
  \caption{\small (a) If $\angle(v_j, v_{j+1}, c) < \angle(v_j, v_{j+1}, v_{j+2})$, the triangles $\Delta(v_j, v_{j+1}, c)$, $\Delta(v_{j+1}, v_{j+2}, q)$ (shown in light gray) and the triangle $\Delta(c,q,v_{j+1})$ (shown in dark gray) are empty. Hence, $c$ is not the closing point for $e_j$. (b) Placement of the witness in case $c$ is only defined by edge intervals: we pick the rightmost such edge interval $e'$, we have $e'=[v_j, q)$ for some point $q\in e_j, q\neq v_{j+1}$, and we place a witness at $q^\varepsilon$.}
  \label{fig:le17}
\end{figure}

\begin{figure}
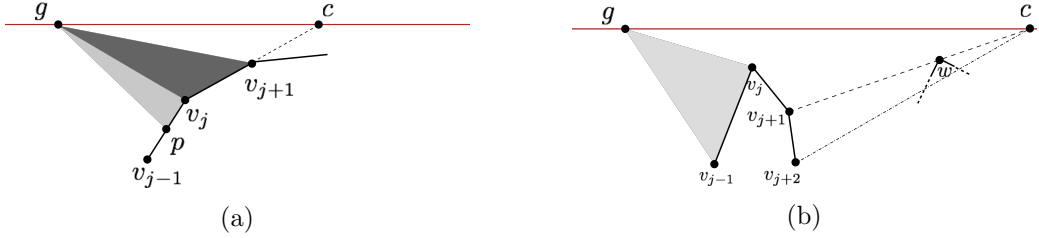

\centering
\comic{.45\textwidth}{Le18a}{(a)}\hfill
\comic{.45\textwidth}{Le18b}{(b)}\hfill
  \caption{\small Cases from the proof of Lemma~\ref{le:aux}: If $v_j$ is a convex (a) or reflex (b) vertex of the chain $g, v_j, v_{j+1}$.}
  \label{fig:le18}
  \end{figure}

We have proved that there exists an edge $e_j=\{v_j, v_{j+1}\}\in E(T)$ for which $c$ is the closing point, such that $v_{j+1}$ is a reflex vertex; we now prove that $v_j$ is a convex vertex. Assume, for the sake of contradiction, that $v_j$ is reflex. Then $c$ cannot be the closing point for $e_{j-1}$, and there exists a guard $g$ with $g<c$ that monitors $(p,v_{j}]\subset e_{j-1}$; this is because irrespective of whether $v_j$ is below or above $v_{j+1}$, the edge $e_{j-1}$ is not seen by $c$ (refer to Fig.~\ref{fig:le18}). Hence, the triangle $\Delta(g,p,v_{j})$ is empty.
We distinguish whether the chain $g, v_j, v_{j+1}$ has $v_j$ as a convex or a reflex vertex.

If $v_j$ is a convex vertex of this chain, see Figure~\ref{fig:le18}(a), then also the triangle $\Delta(g,v_j,v_{j+1})$ is empty. Thus, $g$ also monitors $e_j$. But if $g$ monitors $e_j$, $e_j$ would have been removed from the queue already, that is, $e_j\notin E_g$, a contradiction.

If $v_j$ is a reflex vertex of this chain, see Figure~\ref{fig:le18}(b), there has to exist a vertex $w$, $w>v_{j+2}>v_{j+1}$, that blocks the sight from any point to the right of $c$ to $v_{j+1}$ and makes $c$ the closing point. Then all of the terrain between $v_{j+1}$ and $w$ lies completely below the line segment $\overline{v_{j+1}, w}$. Hence, $c$ cannot see $v_{j+2}$ (in fact it cannot see $(v_{j+1}, v_{j+2}]\subset e_{j+1}$). As $v_j$ is a reflex vertex of the chain $g, v_j, v_{j+1}$, $g$ cannot see $v_{j+2}$ either. Thus, the closing point for $e_{j+1}$ is still in the queue, and to the left of $c$, a contradiction to $c$ being the closing point that is chosen in line~\ref{li:hit} of Algorithm~\ref{alg:tgap}.
\end{proof}

Now we can define our witness set:
\begin{lemma}\label{le:witnesses}
Given the set $G$ output by Algorithm~\ref{alg:tgap}, we can find a witness set $W$ with $|W|=|G|$.
\end{lemma}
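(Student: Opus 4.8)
## Proof Proposal for Lemma~\ref{le:witnesses}

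The plan is to construct a witness set $W$ by placing exactly one witness each time Algorithm~\ref{alg:tgap} places a guard, so that $|W| = |G|$ follows immediately from the bookkeeping. The substance of the argument is to show that the constructed points form a genuine witness set, i.e., that no single point on $\mathcal{A}$ can see two of them simultaneously. First I would split the construction into two cases, mirroring the structure already set up in Lemmas~\ref{le:aux-1} and~\ref{le:aux-2}: the case where the triggering closing point $c$ in line 6 closes a \emph{complete} edge, and the case where $c$ closes only an edge \emph{interval} $e' \subset e$. In the complete-edge case, Lemma~\ref{le:aux-2} guarantees an edge $e_j = \{v_j, v_{j+1}\}$ with $c = p_{e_j}^c$, $v_{j+1}$ reflex, and $v_j$ convex; I would place the witness at the vertex $v_{j+1}$ (or at a point infinitesimally into the interior of $e_j$ near $v_{j+1}$, matching the ``$q^\varepsilon$'' placement suggested in Figure~\ref{fig:le17}(b)). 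In the interval case, following Figure~\ref{fig:le17}(b), I would pick the rightmost still-unseen interval $e' = [v_j, q)$ that is closed by $c$ and place the witness at $q^\varepsilon$, the point just to the left of $q$ on $e_j$.

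Next I would verify the witness property. The key structural fact is that each guard $g_i$ is placed precisely at the closing point $c$ of the witness $w_i$ it is responsible for, so by definition of the closing point, $w_i \in \V_T(g_i)$ but $w_i \notin \V_T(p)$ for any $p > c$ on $\mathcal{A}$. To prove $\V_T(w_i) \cap \V_T(w_j) = \emptyset$ for $i < j$, suppose some point $g \in \mathcal{A}$ sees both $w_i$ and $w_j$. Since $w_i$ is chosen at or near a reflex vertex $v_{j+1}$ that becomes invisible to everything strictly right of $c_i = g_i$, any guard seeing $w_i$ must satisfy $g \le g_i$. I would then invoke Lemma~\ref{le:no-help}: because $w_j$ lies strictly to the right of $g_i$ (it is associated with a later, rightward guard) and $w_j \notin \V_T(g_i)$ — which holds because the algorithm was forced to place a new guard $g_j > g_i$ precisely because $g_i$ could no longer see $w_j$ — no guard $g' \le g_i$ can see $w_j$ either. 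This contradicts $g \le g_i$ seeing $w_j$. The two cases of witness placement only affect which reflex vertex or interval endpoint certifies the ``invisible to the right of $c_i$'' property, but in both the closing-point characterization delivers it.

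The main obstacle I expect is the interval case, where $w_i = q^\varepsilon$ does not sit at a terrain reflex vertex but at an interior point of an edge whose closing point was recomputed by the ray-shooting in lines 16--19. Here I must argue carefully that $q^\varepsilon$ is genuinely invisible to every point of $\mathcal{A}$ strictly right of its closing point $p_{e'}^c$; this should follow from the construction of $p_{e'}^c$ as the intersection of $\mathcal{A}$ with the ray from $r_e$ through the blocking mark vertex $v_{m_e}$, which is exactly the rightmost point of $\mathcal{A}$ from which all of $[q^\varepsilon, \cdot]$ — and hence $q^\varepsilon$ — remains visible. I would also need to confirm that the $\varepsilon$-perturbations can be taken consistently small so that no spurious visibility is introduced and the witnesses remain in the polygon interior of the corresponding edge. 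Once the ``no common viewer'' claim is established in both cases, combining it with the trivial upper bound $|W| \le |G|$ (valid for any witness set and any feasible guard set, noted in the paragraph preceding Lemma~\ref{le:aux-1}) yields $|W| = |G|$ and completes the proof.
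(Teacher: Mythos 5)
Your proposal follows the paper's skeleton closely (one witness per guard, the complete-edge/interval case split, the interval witness at $q^\varepsilon$, and disjointness via a right bound on $\V_T(w_i)$ plus Lemma~\ref{le:no-help}), but the witness placement in the complete-edge case is at the wrong end of the edge, and this breaks the argument. You place $w_i$ at the reflex vertex $v_{j+1}$ (or just inside $e_j$ next to it) and justify $\V_T(w_i)\subseteq\{p\le g_i\}$ by asserting that $v_{j+1}$ ``becomes invisible to everything strictly right of $c$.'' That is not what the closing point certifies: $p_{e_j}^c$ is the rightmost point from which the edge is \emph{entirely} visible, and the part lost first when moving right past $c$ is the deep, convex end $v_j$; the reflex vertex $v_{j+1}$ is a local peak of the terrain and typically remains visible far to the right of $c$. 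This is precisely why the paper proves Lemma~\ref{le:aux-2} and then sets $w_i=v_j$: the occlusion past $c$ gives the right bound for $v_j$, and the convexity of $v_j$ gives the left bound (a guard left of $g_i$ that saw $v_j$ would see all of $e_j$, contradicting $e_j\in E_g$). Concretely, take the terrain $(0,0),(1,2),(2,0),(3,2),(4,0)$ with $\mathcal{A}$ at height $3$: the first guard is forced at $x=1.5$ by the closing point of $e_1$, your witness $v_2=(1,2)$ is still seen from the right end of $\mathcal{A}$ (the segment from $(4,3)$ to $(1,2)$ clears $v_4$ at height $8/3>2$), and that is exactly where the second guard sits and where the second witness on $e_4$ is seen; so $\V_T(w_1)\cap\V_T(w_2)\neq\emptyset$ and your $W$ is not a witness set. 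The paper's witness $v_1=(0,0)$ is occluded by $v_2$ for all $x>1.5$, and the argument goes through.

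A secondary imprecision: your reason that $g_i$ does not see $w_j$ for $j>i$ (``the algorithm was forced to place $g_j$ because $g_i$ could no longer see $w_j$'') is not how the algorithm decides anything---$g_j$ is placed at a closing point, not because a specific point was lost. The correct reasons, which the paper supplies, are that in the complete-edge case $e_k$ can only survive in $E_g$ unsplit if $g_i$ saw none of it (otherwise it would have been cut into an interval), and in the interval case $q$ is by construction the exact boundary of what earlier guards saw, so $q^\varepsilon$ is unseen by them; Lemma~\ref{le:no-help} (equivalently, the paper's empty-triangle argument) then excludes everything further left. With the complete-edge witness moved to $v_j$ and this justification filled in, your argument coincides with the paper's.
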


\begin{figure}
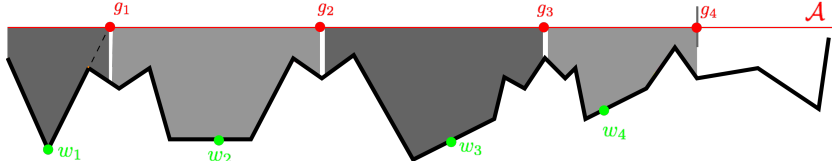

\centering
\comicII{.8\textwidth}{alg-strips}
  \caption{\small $S_i, i=1,\ldots, 4$, from the proof of Lemma~\ref{le:witnesses}, shown in gray.}
  \label{fig:strips}
\end{figure}
\begin{proof}
We consider the edges or edge intervals, which define the closing point $c\in \mathcal{C}$ that leads to a placement of guard $g_i$ in lines \ref{li:hit},~\ref{li:place} of Algorithm~\ref{alg:tgap}.

If $c$ is defined by some complete edge $e_j\in E(T)$, let $E_c\subseteq E_g$ be the set of edges for which $c$ is the closing point (we remind from Algorithm~\ref{alg:tgap} that $E_g$ is the set of yet-to-be-seen edges---the algorithm terminates when $E_g=\emptyset$). We pick the rightmost edge $e_j\in E_c$ such that $v_j$ is a convex vertex and $v_{j+1}$ is a reflex vertex, which exists by Lemma~\ref{le:aux}, and choose $w_i=v_j$.

Otherwise, that is, if $c$ is only defined by edge intervals, we pick the rightmost such edge interval $e'\subset e_j$. Then $e'=[v_j, q)$ for some point $q\in e_j, q\neq v_{j+1}$, and we place a witness at $q^\varepsilon$, a point $\varepsilon$ to the left of $q$ on $T$: $w_i=q^\varepsilon$, see Figure~\ref{fig:le17}(b).

We define $W=\{w_1,\dots,w_{|G|}\}$. By definition $|W|=|G|$, and we still need to show that $W$ is indeed a witness set.

Let $S_i$ be the strip of all points with $x$-coordinates between $x(g_{i-1})+\varepsilon'$ and $x(g_i)$.  Let $p_T$ be the vertical projection of a point $p$ onto $T$, and $p_\mathcal{A}$ the vertical projection of $p$ onto $\mathcal{A}$.
$S_i=\{p\in\R^2: \left( x(g_{i-1})+\varepsilon'\leq x(p) \leq x(g_i)\right) \wedge \left( y(p_T)\leq y(p) \leq y(p_\mathcal{A})\right)\}$. See Figure~\ref{fig:strips} for an illustration of these strips.

We show that $\V_T(w_i)\subseteq S_i$ for all $i$, hence, $\V_T(w_k) \cap\ \V_T(w_{\ell}) =\emptyset\;\forall w_k\neq w_{\ell} \in W$, which shows that $W$ is a witness set.

If $w_i=v_j$ for an edge $e_j\in E(T)$, $\V_T(w_i)$ contains the guard $g_i$, but no other guard: If $g_{i-1}$ could see $v_j$, we have $\angle(g_{i-1}, v_j, v_{j+1})\leq 180^\circ$ because $v_j$ is a convex vertex, thus, $g_{i-1}$ could see all of $e_j$, a contradiction to $e_j\in E_g$.

Moreover, assume $w_i$ could see some point $p$ with $x(p)\leq x(g_{i-1})$. 
The terrain does not intersect the line $\overline{w_i p}$, and because the terrain is monotone the triangle $\Delta(w_i, p, g_{i-1})$ would be empty, a contradiction to $g_{i-1}$ not seeing $w_i$.

If $w_i=q^\varepsilon$ for $e'=[v_j, q)$, again $\V_T(w_i)$ contains the guard $g_i$, but no other guard: If $g_{i-1}$ could see $w_i$, $q$ would not be the endpoint of the edge interval, a contradiction.

Moreover, assume $w_i$ could see some point $p$ with $x(p)\leq x(g_{i-1})$. Again, the terrain does not intersect the line $\overline{w_i p}$, and because the terrain is monotone the triangle $\Delta(w_i, p, g_{i-1})$ would be empty, a contradiction.
\end{proof}

\begin{theorem}\label{th:opt}
The set $G$ output by Algorithm~\ref{alg:tgap} is {\it optimal}.
\end{theorem}
\begin{proof}
To show that $G$ is optimal, we need to show that $G$ is feasible and that $G$ is minimum, that is 
\begin{equation} |G| = \opt(T,\A) := \min\{ |C| \mid \text{$C \subseteq \A$ is feasible w.r.t. $\atgp(T,\A)$} \}. \nonumber\end{equation}
Feasibility follows from Lemma~\ref{lem:feas}, and by Lemma~\ref{le:witnesses} we can find a witness set  $W$ with $|W|=|G|$, hence, $G$ is minimum.
\end{proof}

\subsection{Uni-monotone Polygons are Perfect}\label{subsec:perf}

In the proof for Lemma~\ref{le:witnesses} we showed that for the \ac{ATGP} there exists a maximum witness set $W\subset T$ and a minimum guard set $G\subset\mathcal{A}$ with $|W|=|G|$. By Lemmas~\ref{lem:uni-mon-H} and~\ref{lem:uni-mon-allP} the \ac{ATGP} and the \ac{AGP} for uni-monotone polygons are equivalent. Thus, also for a uni-monotone polygon $P$ we can find a maximum witness set $W\subset LC(P)\subset P$ and a minimum guard set $G\subset\mathcal{H}\subset P$ with $|W|=|G|$. This yields:

\begin{theorem}\label{cor:perfect}
Uni-monotone polygons are perfect.
\end{theorem}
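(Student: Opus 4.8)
The plan is to reduce Theorem~\ref{cor:perfect} entirely to machinery already established earlier in the excerpt, so that essentially no new geometric work is needed. The definition of a \emph{perfect} polygon class requires that for every $P$ in the class, the cardinality of a minimum guard set for $\agp(P,P)$ equals the cardinality of a maximum witness set. I would therefore fix an arbitrary uni-monotone polygon $P$ and argue that both of these quantities agree by transporting the corresponding statement for the \ac{ATGP}, which has already been proved.

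First I would invoke the equivalence of the two problems. By Lemma~\ref{lem:uni-mon-H}, any optimal guard set for $\agp(P,P)$ can be replaced by an equal-cardinality guard set lying entirely on $\mathcal{H}$, and by Lemma~\ref{lem:uni-mon-allP}, for guards on $\mathcal{H}$ it suffices to cover $LC(P)$ in order to cover all of $P$. Reading $\mathcal{H}$ as the altitude line $\mathcal{A}$ and $LC(P)$ as the terrain $T$, this says precisely that a feasible, minimum guard set for $P$ corresponds to a feasible, minimum guard set for the induced instance $\atgp(T,\mathcal{A})$, and conversely, so the two optimum values coincide. Next I would run Algorithm~\ref{alg:tgap} on this induced \ac{ATGP} instance: by Theorem~\ref{th:opt} its output $G\subset\mathcal{A}$ is an optimal (hence minimum) guard set, and by Lemma~\ref{le:witnesses} there is a witness set $W\subset T$ with $|W|=|G|$.

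It then remains to note that $W$ and $G$ pull back to $P$ correctly. Since $T=LC(P)\subset P$, every witness $w_i\in W$ is a genuine point of $P$, and since in a uni-monotone polygon the visibility $\V_P$ restricted to guards on $\mathcal{H}$ and points on $LC(P)$ is exactly the altitude-visibility $\V_T$ of the induced terrain instance, the pairwise-disjoint visibility condition that makes $W$ a witness set for the \ac{ATGP} makes it a witness set for $\agp(P,P)$ as well. Thus $W$ is a witness set for $P$ of cardinality $|G|=\opt(P,P)$. Because any witness set has cardinality at most that of any guard set, $W$ is in fact a maximum witness set, and its size equals the minimum guard number; as $P$ was arbitrary, uni-monotone polygons form a perfect class.

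The main obstacle, such as it is, is not a deep geometric argument but the careful bookkeeping of the correspondence: one must check that the visibility notions match up exactly under the identification $\mathcal{H}\leftrightarrow\mathcal{A}$ and $LC(P)\leftrightarrow T$, so that ``witness set for $\atgp(T,\mathcal{A})$'' and ``witness set for $\agp(P,P)$'' are literally the same object, and that Lemma~\ref{lem:uni-mon-allP} is what licenses us to verify the witness/disjointness conditions only on $LC(P)$ rather than on all of $P$. Once this dictionary is spelled out, the theorem follows immediately from Theorems~\ref{th:opt} and the witness construction of Lemma~\ref{le:witnesses}, with no further computation.
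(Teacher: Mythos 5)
Your proposal is correct and follows essentially the same route as the paper: the paper's own argument likewise combines the witness set of cardinality $|G|$ from Lemma~\ref{le:witnesses} (together with the optimality of $G$ from Theorem~\ref{th:opt}) with the equivalence of the \ac{ATGP} and the \ac{AGP} for uni-monotone polygons given by Lemmas~\ref{lem:uni-mon-H} and~\ref{lem:uni-mon-allP}. Your additional remark that one must verify the witness/disjointness conditions transfer under the identification $\mathcal{H}\leftrightarrow\mathcal{A}$, $LC(P)\leftrightarrow T$ is a point the paper glosses over but which is indeed covered by the strip argument in the proof of Lemma~\ref{le:witnesses}.
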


\subsection{Guarding Monotone Mountains}\label{subsec:monmount}

We considered the \acf{AGP} in uni-monotone polygons, for which the upper polygonal chain is a single horizontal edge. There exist a similar definition of polygons: that of {\it monotone mountains} by O'Rourke~\cite{o-vplmm-97}. A polygon $P$ is a monotone mountain if it is a monotone polygon for which one of the two polygonal chain is a single line segment (which in contrast to a uni-monotone polygon does not have to be horizontal). By examining our argument, one can see that we never used the fact that $\cal H$ is horizontal, so all our proofs also apply to monotone mountains, and hence, we have:
\begin{corollary}\label{cor:mm-perfect}
Monotone mountains are perfect.
\end{corollary}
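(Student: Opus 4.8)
The plan is to show that the entire machinery developed for uni-monotone polygons transfers to monotone mountains by a single affine (shear) transformation, so that no new combinatorial argument is needed. The key observation is that a monotone mountain differs from a uni-monotone polygon only in that its straight chain need not be horizontal; everything else (monotonicity, the fact that one chain is a single segment) is identical. First I would fix a monotone mountain $P$ whose single-segment chain is a line segment $\mathcal{H}'$ of arbitrary slope, and consider the linear shear $\phi(x,y) = (x, y - sx)$ where $s$ is the slope of $\mathcal{H}'$. This map sends $\mathcal{H}'$ to a horizontal segment while fixing every vertical line setwise, so $\phi(P)$ is a genuine uni-monotone polygon in the sense of the paper.

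The crucial step is to verify that $\phi$ preserves all the combinatorial and visibility structure on which Lemmas~\ref{lem:uni-mon-H}, \ref{lem:uni-mon-allP}, \ref{le:aux-1}, \ref{le:aux-2}, and~\ref{le:witnesses} rely. Since $\phi$ is affine it maps line segments to line segments and preserves collinearity, incidence, and the property of a point lying above or below another point \emph{on the same vertical line} (because $\phi$ shifts each vertical line rigidly in the $y$-direction). Consequently $\phi$ preserves the $x$-ordering of points, preserves which triangles are empty, and preserves the ``nowhere below'' relation that defines visibility in $P$: for $p,q \in P$, the segment $\overline{pq}$ lies in $P$ if and only if $\overline{\phi(p)\phi(q)}$ lies in $\phi(P)$. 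Hence $\phi$ induces a bijection between guard sets of $P$ and of $\phi(P)$ that preserves coverage and cardinality, and likewise a bijection between witness sets preserving the disjointness condition $\V_P(w_i)\cap\V_P(w_j)=\emptyset$.

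Given this, the argument is immediate: by Theorem~\ref{cor:perfect} the uni-monotone polygon $\phi(P)$ is perfect, so it admits a minimum guard set $G'$ and a maximum witness set $W'$ with $|G'|=|W'|$. Pulling these back through $\phi^{-1}$ yields a minimum guard set $\phi^{-1}(G')$ and a maximum witness set $\phi^{-1}(W')$ for $P$ of equal cardinality, so $P$ is perfect as well. Since $P$ was an arbitrary monotone mountain, the class of monotone mountains is perfect.

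The main obstacle I anticipate is not the shear itself but confirming that the \emph{altitude-line reformulation} survives the transformation cleanly. The paper's results are phrased via the equivalence between the \ac{AGP} in the polygon and the \ac{ATGP}, where guards live on the horizontal segment $\mathcal{H}$ and witnesses on the lower chain; after shearing, the straight chain $\mathcal{H}'$ plays the role of the altitude line but is tilted. One must therefore either (i) argue directly that $\phi$ carries the tilted-altitude guarding instance to the standard horizontal-altitude instance while preserving optimality and witness-disjointness, or (ii) observe that Lemmas~\ref{lem:uni-mon-H} and~\ref{lem:uni-mon-allP}, whose proofs use only vertical projection and emptiness of vertical triangles, remain valid verbatim for a tilted straight chain, so that the reduction to guards on $\mathcal{H}'$ and coverage of the opposite chain holds for $P$ without passing through $\phi$ at all. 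Option (ii) is cleaner and is essentially what the authors mean by ``all our proofs also apply''; the only care needed is checking that each appeal to $x$-monotonicity and to vertical projections in those proofs uses only properties that survive a nonhorizontal straight chain, which they do since monotonicity is an $x$-axis property independent of the chain's slope.
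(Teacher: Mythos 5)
Your argument is correct, but it is more explicit than what the paper actually does: the paper's entire ``proof'' of Corollary~\ref{cor:mm-perfect} is the one-line assertion that all preceding proofs also apply to monotone mountains, i.e., your option~(ii). Your primary route, the shear $\phi(x,y)=(x,y-sx)$, is a genuinely different and arguably cleaner argument: since $\phi$ is an affine bijection fixing every vertical line setwise, it preserves $x$-order, segment containment, and hence visibility, so it induces cardinality-preserving bijections between guard sets and between witness sets of $P$ and of the uni-monotone polygon $\phi(P)$; perfectness then pulls back from Theorem~\ref{cor:perfect} with no need to re-examine any of the sweep or witness-placement lemmas. What the shear buys is rigor and locality\dash---one checks a single transformation once rather than auditing every appeal to horizontality scattered through Lemmas~\ref{lem:uni-mon-H}--\ref{le:witnesses}; what the paper's (and your option~(ii)'s) route buys is that the algorithm itself runs directly on the monotone mountain without preprocessing. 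Two small points you should make explicit if you write this up: O'Rourke's monotone mountains usually have the single-segment chain as the \emph{lower} chain, so a vertical reflection may be needed in addition to the shear to match the paper's convention that $\mathcal{H}$ is the upper chain; and one should note that $\phi$ preserves the above/below relation on each vertical line, so the image of the straight chain is still the upper chain of $\phi(P)$. Neither is a gap, just bookkeeping.
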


\subsection{Algorithm Runtime}\label{subsec:runtime}

Remember that we make an initial sweep from the rightmost vertex to the leftmost vertex; for each vertex we shoot a ray to all other vertices to its left and mark the points, \emph{mark points}, where these rays hit the edges of the terrain. This leaves us with $O(n^2)$ preprocessed intervals. For each mark point $m$ we store the rightmost of the two terrain vertices that defined the ray hitting the terrain at $m$, and we denote this terrain vertex by $v_m$.

The preprocessing step to compute the mark points costs $O(n^2\log n)$ time by ray shooting through all pairs of vertices (this can be reduced to $O(n^2)$ with the output-sensitive algorithm for computing the visibility graph~\cite{ghosh1991output}, which also outputs all visibility edges sorted around each vertex). Based on these we can compute the closing points for all edges of the terrain. Similarly, we compute the mark points from the left to compute the opening points (using the left vertex of an edge to shoot the ray) and the soft opening points (using the right vertex of an edge to shoot the ray).

Then, whenever we insert a guard (of which we might add $O(n)$), we need to shoot up to $O(n)$ rays through terrain vertices to the right of this guard, see Figure~\ref{fig:nr-rays}, which altogether costs $O(n^2 \log n)$ time \cite{hershberger1995pedestrian}. Let the set of these rays be denoted by $R_i$ for guard $g_i$. The rays may split an edge (that is, the placement of guard $g_i$ resulted in an open interval of an edge $e'\subset e$ not yet being guarded). Let the intersection point of an edge $e$ and a ray from $R_i$ be denoted by $r_e$, it defines the right point of $e'$.
For each of the intersection points $r_e$, we identify the mark point $m_{e'}$ to the right of $r_e$ and we need to shoot a ray $\ell_{e'}$ from $r_e$ through $v_{m_e'}$ (the terrain vertex we stored with the mark point $m_{e'}$) to compute the new closing point. That is, the intersection point of $\ell_{e'}$ and $\mathcal{A}$ defines our new closing point $p_{e'}^c$. This gives a total runtime of $O(n^2 \log n)$.

\begin{figure}
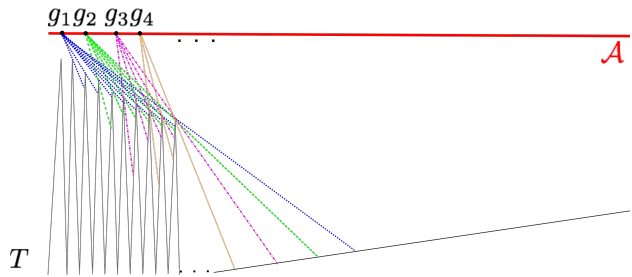

\centering
\comicII{.6\textwidth}{number-ray-cropped-colors}
  \caption{\small An example where for $O(n)$ guards each guard needs to shoot $O(n)$ (colored) rays to compute mark points to its right, yielding a lower bound of $O(n^2)$ for this approach.}
  \label{fig:nr-rays}
\end{figure}

\ifimproved
\subsection{Improving the Runtime}\label{subsec:improved}
In this section we make some observations on the visibility characterizations that allow us to obtain a simple, greedy, linear-time algorithm for the ATGP (the algorithm, however, does not show the perfectness).

For a point \textit{v} on \textit{T}, we define the right intercept, $p_v^c$, 
and the left intercept, $p_v^o$, 
as the rightmost and leftmost point on $\A$ in $\V_P(v)$, respectively.
(These are similar to the closing/opening points for edges of the terrain, defined earlier.)
Equivalently, for a line segment \textit{s} on \textit{T}, we define the
closing point, $p_s^c$, and the opening point, $p_s^o$, as the right and left intercept on $\A$ in $\V_P(s)$, respectively.
For an example, consider Figure~\ref{fig1}: \textit{x} and \textit{z} are the left and right intercept of point \textit{t}, respectively, and \textit{w} and \textit{y} are the left and right intercept of point \textit{q}, respectively. For the edge $tq$, $x$ and $y$ are the left and right intercept, respectively. If we move along $\A$, from $a$ to $b$, $tq$ becomes partially visible at $w$, that is, $w$ is the soft opening point for $tq$, while $z$ is the last point from which $tq$ is partially visible. The segment is completely visible for any point on $\A$ between $x$ and $y$. Notice that $p_{tq}^o=p_t^o$ and $p_{tq}^c=p_q^c$. 


\begin{figure}
\centering
\includegraphics[scale=0.58]{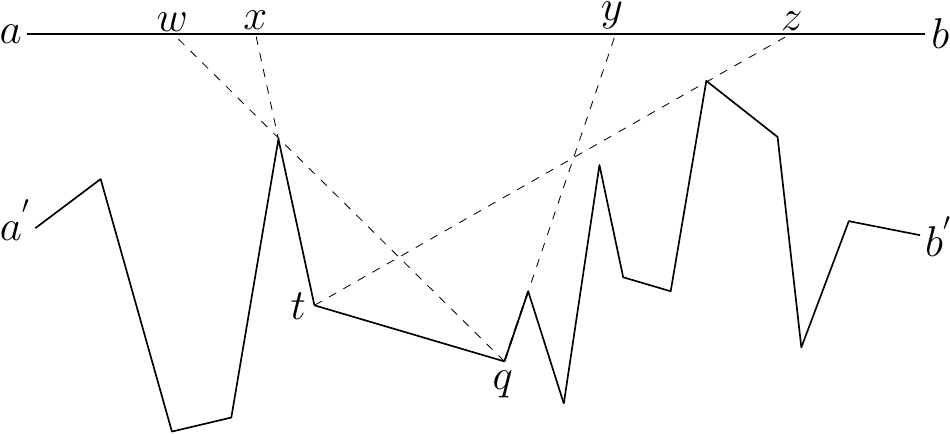}
\caption{Terrain $T$ (x-monotone chain from $a^{'}$ to $b^{'}$) with altitude line $\mathcal{A}=ab$. Left and right intercepts ($w,x,y$, and $z$) of points \textit{t}, \textit{q} and line segment \textit{tq} are shown.}
\label{fig1}
\end{figure}


We first compute the shortest path tree from each of $a$ and $b$ to the vertices of $T$, where $a$ and $b$ are the endpoints of $\mathcal A$. This can be done in $O(n)$ time~\cite{at-oadvp-81}.
Let $T_a$ and $T_b$ be the shortest path trees originating from $a$ and $b$, respectively. Both $T_a$ and $T_b$ have $O(n)$ vertices and edges.
For a point $v \in T$, let $P_{v,a}$ and $P_{v, b}$ be the shortest paths from $v$ to $a$ and $v$ to $b$, respectively.  Note that these shortest paths consist of convex chains of total complexity $O(n)$.

Let $\pi_a(u)$ denote the parent of $u$ in $T_a$ and let $\pi_b(u)$ denote the parent of $u$ in
$T_b$. To find the right intercept of a vertex $v$ of $T$ we can extend the segment $v\pi_b(v)$
of $P_{v, b}$ and find its intersection with $\A$. To find the left intercept of vertex $v$, we can extend the segment $v\pi_a(v)$ of $P_{v, a}$ and find its intersection with $\A$ (see Figure~\ref{fig1} and Figure~\ref{fig2}). Similarly, we can find the left and right intercept of a line segment $s \in T$.

Our algorithm proceeds in a greedy fashion, placing guards on $\A$ in order, from $a$ to $b$.
Let $g_1,g_2,\ldots,g_i$ be the guards placed so far. As discussed in Lemma~\ref{le:no-help}, all edges
that lie to the left of the last placed guard, $g_i$, and the edge vertically below $g_i$, are visible by the guards placed so far. Thus, after placing $g_i$, we need to be concerned with the edges to the right of $g_i$.

Let $e=tq$ be an edge of $T$ that lies to the right of $g_i$. Then $tq$ is either (a) visible from $g_i$, (b) not visible from $g_i$ (no point of $tq$ is visible from $g_i$) or (c) partially visible from $g_i$, in which case $g_i$ sees a sub-segment $q'q$ of $tq$. An easy observation
from~\cite{at-oadvp-81} and Lemma~\ref{le:no-help} is that none of the guards preceding
$g_i$ on $\A$ can see any point
of $tq'$; that portion of $tq'$ can only be seen by a guard placed to the right of $g_i$.

Lemma~\ref{le:aux} shows that the guards forming the optimal set must be placed at
well defined points on $\A$, each of which corresponds to a right intercept, $p_v^c$, where $v$ is
either a vertex of $T$ or otherwise it corresponds to some point on a partially visible edge, as described earlier. This implies that, starting from $g_i$, the next guard will be placed at the leftmost right intercept $r^l$ on $\A$, among those generated by the edges to the right of $g_i$.
We thus walk right along the terrain, placing the guards when needed: once we reach an edge vertically below $r^l$ we place $g_{i+1}$ at $r^l$ and repeat the process.

Note that to achieve linear time we cannot afford to keep the right intercepts in sorted order (see~\cite{cd-mvpmp-98}). Instead, it is enough to keep track of the leftmost right intercept
corresponding to the edges of $T$, including those generated by partially visible edges, following
$g_i$.

\begin{observation}
\label{Obs1}
After placing $g_{i+1}$ all edges of $T$ between $g_i$ and $g_{i+1}$ are visible by the guards
$g_1,g_2,\ldots,g_{i+1}$.
\end{observation}

It follows from Observation~\ref{Obs1} that after placing $g_{i+1}$ we do not need to be concerned with the right intercepts of the edges of $T$ between $g_i$ and $g_{i+1}$.



For a segment $s$ of $T$, we define $x_{s}^l$ as the \textit{x}-coordinate of the leftmost point of $s$ and $x_{s}^r$ as the \textit{x}-coordinate of the rightmost point of $s$  (for an edge $s=e_i=v_iv_{i+1}$: $x_s^l = x(v_i)$ and $x_s^r=x(v_{i+1})$).

We now describe our algorithm in more details. Observe that all edges to the left of the first guard $g_1$ must be fully seen by $g_1$. To place $g_1$, we traverse the edges of $T$ in order, starting with $e_1$. For each edge visited, we mark it as visible, compute its right intercept (its closing point) on $\A$, and keep track only of the leftmost such intercept, $r^l$. Once we reach an edge $e_i \in T$ such that
$x(v_i) \leq r^l < x(v_{i+1})$
we stop, mark $e_i$ as visible, and place $g_1$ at $r^l$.
We then repeat the following inductive process.
Assume guard $g_i$ has been placed. We start with the
first edge of $T$ to the right of $g_i$ and check if the edge is visible, not visible, or partially visible from $g_i$. Let $e_k$ be the current edge. If $e_k$ is visible then we mark it as such.
If $e_k$ is not visible then we compute its right intercept on $\A$ while keeping track of
the leftmost right intercept, $r^l$, following $g_i$ on $\A$. If $e_k$ is partially visible, let
$e'_k$ be the segment of $e_k$ not visible from $g_i$ and let $q'$ be the right endpoint of $e'_k$; we compute the right intercept of $q'$ on $\A$, $p_{q'}^c$, while keeping track of $r^l$. Once we reach an edge
$e_i \in T$ such that
$x(v_i) \leq r^l < x(v_{i+1})$
we stop, mark $e$ as visible, and place $g_{i+1}$ at $r^l$.
The proof that this greedy placement results in an optimal set of guards has been given
in Section~\ref{subsec:minguard}.

\begin{figure}
\centering
\includegraphics[scale=0.6]{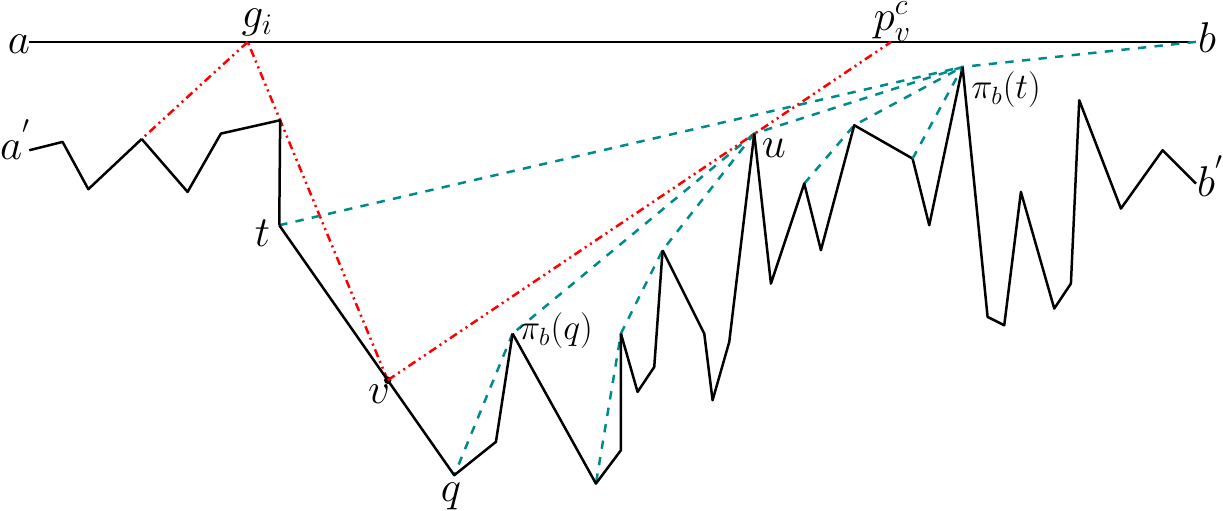}
\caption{Line segment $tq$ is partially seen by guard $g_i$. Shortest path tree originating from \textit{b} is shown with dashed lines (cyan). }
\label{fig2}
\end{figure}

\begin{lemma}\label{lem:righti}
Given an edge $e=tq$ of $T$ and a point $v \in e$, the right intercept {$p_v^c$} of $v$ can be found in $O(1)$ amortized time. A similar claim holds for the left intercept of $v$.
\end{lemma}

\begin{proof}
We present the proof for the right intercept (for the left one it is similar).

The shortest path from $a$ and $b$ to each vertex of $T$ can be found in $O(n)$
time (see Subsection~\ref{subsec:runtime}) 
and is available in the resulting shortest path tree. These shortest
paths consist of convex chains.
Let $T_b^{u}$ be the subtree of $T_{b}$ rooted at vertex $u$.

Recall that $\pi_b(u)$ denotes the parent of vertex $u$ in $T_b$.
Obviously, if $v$ is an end vertex of $e$, the right intercept of $v$ is available in constant
time from $T_b$, as {the intersection of the extension of $v\pi_b(v)$ and $\A$}. Assume $v$ is interior to $e$.



To find the right intercept of $v$, we need to find the \textbf{first vertex} \textit{u} of
$T_b$ on the shortest path, $P_{v, b}$, from $v$ to $b$; the intersection of the extension of
$vu$ and $\A$ corresponds to {$p_v^c$}. Note that $vu$ is tangent to a convex chain of $T_b$ at
point $u$, specifically the chain capturing the shortest path from $q$ to $b$ in $T_b$. Hence, we can find {$p_v^c$} by finding the tangent from $v$ to that convex chain while traversing the chain starting at $q$. Moreover, the vertex $u$ is located on the portion of the chain from $q$ to $\pi_b(t)$.
Due to the structure of the shortest paths, it is an easy observation that this subchain of $T_b$ will not be revisited while treating an edge of $T$ to the right of $e$ (see Figure~\ref{fig2}).
Since the total complexity of the convex chains is $O(n)$ it follows that over all edges of $T$ we find {$p_v^c$} in amortized $O(1)$ time.
\end{proof}

{ The visibility of an edge $e=tq$ from the last guard ($g_i$) placed on $\A$ can be found by comparing the $x$-coordinate of guard $g_i$, $x(g_i)$, with the left intercept of point $q$, $x(p_q^o)$, and the left intercept of point $t$, $x(p_t^o)$. Line segment $tq$ is (a) completely visible from $g_i$ if $x(p_t^o) \leq x(g_i)$, (b) not visible from $g_i$ if $x(g_i) < x(p_q^o)$ (c) partially visible from $g_i$ if $x(p_q^o) \leq x(g_i) < x(p_t^o)$. To find the partially visible sub-segment $q'q$ of $tq$ we find vertex $u$ of $T_a$ on the shortest path from $t$ to $\pi_a(q)$ such that the line segment $ug_i$ joining $u$ and $g_i$ is tangent to the convex chain of $T_a$ at point $u$. The intersection of the line supporting $g_iu$ with $tq$ corresponds to point $q'$.

\begin{lemma}\label{lem:left-int}
For an edge $e=tq$ of $T$ that is partially visible from guard $g_i$ the point $q'$, defining the
visible portion $q'q$ of $tq$ from $g_i$, can be found in $O(1)$ amortized time.
\end{lemma}

\begin{proof}
To find the vertex $u$ defining the tangent $g_iu$ we traverse the convex subchain of $T_a$
from $t$ to $u$. Obviously, no other point on $T$ to the right of $q$ would use this subchain in a shortest path to $g_i$ or any other point on $\A$ to the right of $g_i$. Since the
total complexity of the convex chains of $T_a$ is $O(n)$, it follows that over all edges of $T$ we find partial visibility in amortized $O(1)$ time.
\end{proof}
}



For an example, see Figure~\ref{fig3}. We start with $e_1$ and store {$p_{e_1}^c$} (right intercept of $e_1$) as $\mbox{r}^{'}$. We move to the next line segment, $e_2$, and {$p_{e_1}^c = p_{e_2}^c$}.  For edge $e_3$, {$p_{e_3}^c < p_{e_1}^c$}, we update $\mbox{r}^{'}={p_{e_3}^c}$. We move to the edge $e_4$ and {$p_{e_3}^c = p_{e_4}^c$}. For $e_5$, ${p_{e_5}^c}>\mbox{r}^{'}$, hence, no update is necessary. Moreover, 
$x(v_5) \leq \mbox{r}^{'} < x(v_6)$.
Hence, we place the first guard at $r^{'} = {p_{e_3}^c}$.

\begin{figure}
\centering
\includegraphics[width=.95\textwidth]{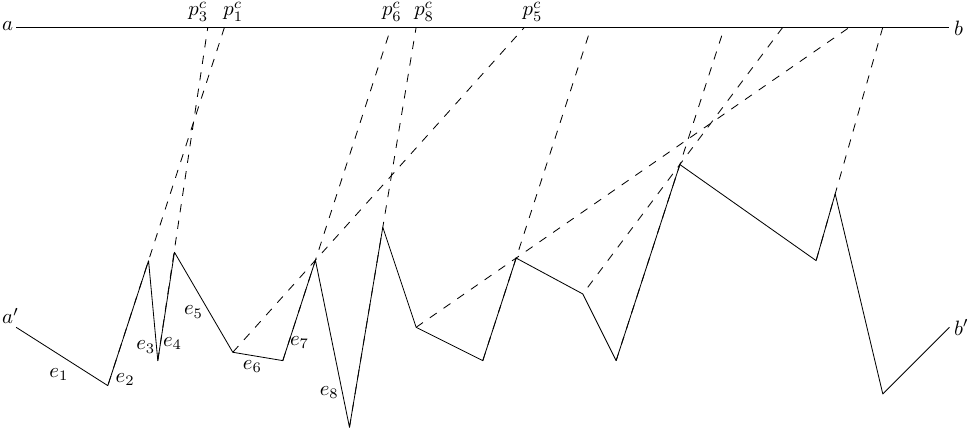}
\caption{Terrain $T$ with right intercept of each edge.}
\label{fig3}
\end{figure}




The algorithm visits each edge $e$ of $T$ only once, and the total time spent while visiting a line
segment can be split into the following steps:
\begin{enumerate}
\item The time taken to decide the visibility of $e$ from the last placed guard.
\item The time to find the partially visible segment of $e$, if needed.
\item The time to find the right intercept of a point $v$ on edge $e$.
\item The time to compare {$p_e^c$} or {$p_v^c$} with $\mbox{r}^{'}$.
\end{enumerate}

Since we know the location of the last guard on $\A$ the first step takes constant time.
The second step and the third step take $O(1)$ amortized time (see Lemma~\ref{lem:righti}
and Lemma~\ref{lem:left-int}). The last step takes constant time.
Hence, the total running time of the algorithm is $O(n)$.

\begin{theorem}
The algorithm presented solves the $\atgp(T,\A)$ problem in $O(n)$ time.
\end{theorem}

\else

\subsubsection{Improved Preprocessing Step using Convex Hulls}\label{subsec:ch}
In fact, we do not need to shoot a ray from the rightmost vertex to all the vertices to its left and so on, in the preprocessing step described in Subsection~\ref{subsec:split}. We stepwise build the convex hull (CH) of the terrain vertices from the right, and only the terrain vertices on this CH are candidates for any rays intersecting with a terrain edge to the left of this CH (if we shoot a ray from a CH vertex through a terrain vertex within the CH, this ray can never intersect with an edge to the left of the CH), see Figure~\ref{fig:CH-right}.

\begin{figure}
\centering
\comicII{.4\textwidth}{CH-right}
  \caption{\small The convex hull of all terrain vertices to the right of $e$ is shown in gray. The two orange CH edges are candidates only for the intersection with $e$, once $e$'s left vertex is added to the CH (the dashed edge is added, and the two orange edges are deleted), and we proceed to the left, they can never define a mark point again.}
  \label{fig:CH-right}
\end{figure}

Thus, we obtain at most $n$ mark points on all edges in $E(T)$, that is, an amortized constant number of mark points per edge. Moreover, this process directly outputs the mark points in the right order. If we assume that the terrain vertices are given in order, the preprocessing step that stepwise builds the CH of the terrain vertices from the right and computes the mark points costs $O(n)$.
Similarly, we build up a convex hull from the left to compute all the opening points for the terrain edges.

However, the improvement for the preprocessing step does not lead to an improved asymptotic runtime.

\fi 

\section{Conclusion and Discussion}
\label{sec:conclusion}

We presented an optimal, linear-time algorithm for guarding a 1.5D terrain from an altitude line (the \ac{ATGP}) and for the art gallery problem in uni-monotone polygons and monotone mountains. We further showed that the \ac{ATGP} and the \ac{AGP} in uni-monotone polygons are equivalent.
We proved optimality of our guard set by placing a maximum witness set (packing witnesses) of the same cardinality. Hence, we established that both uni-monotone polygons and monotone mountains are perfect.

\ifimproved\else
Currently, when we place a new (of $O(n)$) guard, we shoot up to $O(n)$ rays, and then shoot another ray from the intersection point of ray and terrain through the vertex stored with the corresponding mark point. Possibly, this process and, hence, the algorithm runtime can be improved. However, our focus was on showing perfectness and that the problems actually allow for polynomial-time algorithms.
\fi

In our algorithm, we compute the optimal guard set for a given altitude line $\mathcal{A}$. The question at which heights $a_h$ of $\mathcal A$ the minimum guard set has a specified size $k \ge 1$ is open.

Moreover, while guarding a 2.5D terrain from an altitude plane above the terrain is NP-hard, it would be interesting to find approximation algorithms for that case.

 \bibliographystyle{elsarticle-num}
 \bibliography{bibliography}

\end{document}
\endinput